\documentclass[12pt]{article}
\usepackage{pudlatex}
\usepackage{bussproofs}

\newcommand\Res{\mathrm{Res}}
\newcommand\SF{\mathrm{SF}}
\newcommand\EF{\mathrm{EF}}
\newcommand\PV{\mathrm{PV}}
\newcommand\imp{\mathop{\rm Imp}\nolimits}
\newcommand\nimp{\mathop{\rm NImp}\nolimits}
\newcommand\impres{\imp_{\Res}}
\newcommand\nimpres{\nimp_{\Res}}

\newcommand\ang[1]{\langle #1 \rangle}

\newcommand\seq{\longrightarrow}
\newcommand\PP{\mathrm{P}}

\def\Krajicek{Kraj\'{i}\v{c}ek}

\title{Quantified propositional calculi \\ and narrow implicit proofs} 
\author{Pavel Pudl\'ak\thanks{
Institute of Mathematics, Czech Academy of Sciences,
\texttt{pudlak@math.cas.cz}, \texttt{thapen@math.cas.cz};
supported by the
Czech Academy of Sciences (RVO 67985840)} \thanks{Partially supported by GA \v{C}R grant  25-16311S.} 
~ and~ Neil Thapen\footnotemark[1] }

\begin{document}

\maketitle

\begin{abstract}
In the implicit version of a propositional proof system $Q$, we work with $Q$-proofs
that are not written down directly, but are succinctly encoded by circuits.
Thus implicit $Q$-proofs are potentially exponentially shorter than 
usual $Q$-proofs. We study  \emph{narrow} implicit proofs, a restricted version of this notion,
in which lines in the encoded proof can only have polynomial size. 
We use a cut-elimination construction to show that
$G_{i+1}$ is equivalent to narrow implicit $G_i$,
 for $i \ge 1$, where $G_i$ is the 
extension of Frege allowing reasoning with $\Sigma^q_i$
 quantified propositional formulas.
We show that $G_1$ is equivalent to implicit resolution.
\end{abstract}

%%%%%%%%%%%%%%%%%%%%%%%%%%%%%%%%%%%%%%%%%%%%%%%%%%%%%%%
%%%%%%%%%%%%%%%%%%%%%%%%%%%%%%%%%%%%%%%%%%%%%%%%%%%%%%%
\section{Introduction} \label{sec:intro}
%%%%%%%%%%%%%%%%%%%%%%%%%%%%%%%%%%%%%%%%%%%%%%%%%%%%%%%
%%%%%%%%%%%%%%%%%%%%%%%%%%%%%%%%%%%%%%%%%%%%%%%%%%%%%%%

Kraj\'i\v cek~\cite{krajicek-implicit} defined an operation that from two proof systems $P$ and $Q$ produces another proof system, which he denoted by $[P,Q]$ and we will denote by $\imp_P(Q)$.
%\footnote{We apologize for changing notation, but brackets are used in mathematics too often. Our notation emphasizes that, at least in the cases we consider, the choice of~$Q$ seems more important for the strength of the resulting system 
%than the choice of~$P$.} 
A proof of a formula $\phi$ in $\imp_P(Q)$ is a pair $(\Pi,C)$, where $C$ is a Boolean circuit that succinctly defines a possibly exponential-size $Q$-proof of $\phi$, and $\Pi$ is a $P$-proof of a formula that expresses this fact. 

In the special case when $P=Q$, Kraj\'i\v cek calls such a proof system \emph{implicit $P$} and denotes it by $iP$. This operation seems to produce from $P$ an essentially stronger proof system, at least for some natural  systems. We will not define this general concept more precisely, because we will only be interested in a restricted version.% of the operation~$\imp(P,Q)$.) 

%The concept of an implicit proof is very natural, but unfortunately, it is not well-behaved. For instance,
%it is shown in~\cite{khaniki} that
% there exist proof systems $P$, $Q$ and $R$ such that
% $R$ polynomially simulates $Q$, but 
%$\imp_P(R)$ does not polynomially simulate $\imp_P(Q)$. 
%In that example one of the proof systems is rather unnatural; 
%it is possible that for natural proof systems that are defined by sequences of formulas 
%this cannot happen.

In this paper we will only consider implicit proof systems constructed from proof systems
of a certain natural form,
in which proofs are sequences of lines, and each line can be represented by a formula.
Furthermore we impose the restriction that, although our circuit-defined proofs may be exponentially large,
each line can only have polynomial size. We call these \emph{narrow implicit} proof systems
and use the notation~$\nimp_P(Q)$.\footnote{
We choose this notation to emphasize the different roles played by~$P$ and~$Q$ and  that, at least in the cases we consider, the choice of~$Q$ seems more important for the strength of the resulting system 
than the choice of~$P$.} 
This modification was also defined by Kraj\'i\v cek in~\cite{krajicek-implicit}.

The concept of an implicit proof is very natural, but unfortunately, it is not well-behaved. For instance,
it is shown in~\cite[Proposition~7.1]{khaniki} that, under the assumption that
there exists $f$ computable in time $2^{O(n)}$ with $f \notin_{i.o.}$ P/poly,
we can define proof systems $P$ and $Q$  such that $P \equiv_p Q$
 but $\impres(P) \not \equiv_p \impres(Q)$.
An example of bad behaviour for narrow implicit proofs follows from our results, 
namely that $G_1 \ge_p \EF$, but it is unlikely that 
$\nimpres(G_1) \ge_p \nimpres(\EF)$, since this would imply
$G_2 \ge_p G$.
This is because $\nimpres(G_1) \equiv_p G_2$  by our Theorem~\ref{the:main},
 and $\nimpres(\EF) \ge_p G$ by~\cite{krajicek-implicit}. (See below for relevant definitions.)

%Our main aim is to show a connection between these systems and proofs obtained by cut-elimination.
% The idea is that by cut-elimination one gets an exponentially larger proof when one eliminates the top level of cuts, but the resulting proof is still easily describable. 
% This is based on a trick that first appeared in~\cite{buss-insitu}, which is to construct a dag-like proof
%  with top-level cuts eliminated instead of the traditionally used tree-like proofs. 
%  This enables one to keep the size of the top-level-cut-free proofs only single-exponential size.
%   We will demonstrate this idea with quantified propositional calculi. 
% Work on cut-elimination with polynomial-time machinery also appeared in~\cite{ab_10, bb_10}.

Our main aim is to show a connection between implicit proof systems and proofs obtained by cut-elimination.
 The idea is that by cut-elimination one gets an exponentially larger proof when one eliminates the top level of cuts, but the resulting proof is still easily describable. 
 This is based on a trick that first appeared in~\cite{buss-insitu}, 
 which is to apply cut elimination to a treelike proof, using operations that are in some sense local,
 in a way that results in a daglike proof with only a small increase in size.
This enables one to eliminate top-level cuts from a daglike proof with only a single-exponential
increase in size, by first unfolding the daglike proof
into a treelike one (a single-exponential size increase) and then doing cut elimination as described.
    We will apply this idea to quantified propositional calculi. 
 Work on cut-elimination with polynomial-time machinery also appeared in~\cite{ab_10, bb_10}.

The basic proof system for quantified propositional tautologies is a sequent calculus,
 with the quantifier rules adapted to propositional quantifiers,
 denoted by~$G$. Its fragments, in which the quantifier complexity of formulas is restricted to be $\Sigma^q_i$, are denoted by~$G_i$~\cite{krajicek-pudlak90}. See Section~\ref{sec:systems} below for definitions. Our main result is
\begin{theorem} \label{the:main} 
$G_{i+1} \equiv_p \nimpres(G_i)$ for $i \ge 1$, and
$G_1 \equiv_p \nimpres(\Res)$. 
\end{theorem}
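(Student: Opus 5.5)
The proof has two directions, which I handle separately for each $i$.

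\emph{Simulating the implicit system.} First I show $G_{i+1} \ge_p \nimpres(G_i)$ and $G_1 \ge_p \nimpres(\Res)$. An $\nimpres(G_i)$-proof of $\phi$ is a pair $(\Pi,C)$. Here $\Pi$ is a resolution refutation of the (CNF encoding of the) statement that $C$ fails to encode a correct $G_i$-proof of $\phi$. Since lines are narrow, $C$ maps an index $u$ (an $n$-bit string) to a polynomial-size sequent $S_u$ together with the indices of its premises. The correctness statement is local: for each $u$, $S_u$ follows from its premises by a $G_i$ rule, and the last sequent is $\phi$.

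Working in $G_{i+1}$, I prove by induction on $u$, in the lexicographic order of the index bits, the $\Sigma^q_{i+1}$ (in fact $\Pi^q_{i+1}$-ish) statement $\forall v\le u\ \mathrm{Taut}(S_v)$. Here I use that validity of a $\Sigma^q_i$ sequent is $\Pi^q_{i+1}$ (or $\Pi^q_i$ when $i\ge1$ handles the $\Sigma^q_i$ side formulas via a universal truth definition). The exponential-length induction is carried out in $G_{i+1}$ by the standard trick of proving it bit by bit, using $\Sigma^q_{i+1}$-cuts, giving polynomial size. Soundness of each individual step needs the local correctness, which $G_{i+1}$ extracts from $\Pi$: $G_{i+1}$ (indeed $G_1\ge\EF\ge\Res$) can turn the refutation $\Pi$ into a proof of the local correctness clauses with $u$ as free variables.

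For $i=0$ the lines are clauses, and the relevant invariant is $\forall v\le u\,\forall x\,(x\models S_v)$, which is $\Pi^q_1$. The induction is then done in $G_1$, which is known to handle $\Pi^q_1$/$\Sigma^q_1$ induction of this polynomial-local kind. Equivalently, I appeal to the known correspondence of $G_1$ with $\mathrm{S}^1_2$-style reasoning.

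\emph{Simulating $G_{i+1}$.} Conversely, given a $G_{i+1}$-proof, I first make it treelike with an exponential blowup; treelike $G_{i+1}$ may need care, but I unfold the dag into a tree indexed by paths. I then eliminate the top-level $\Sigma^q_{i+1}$ cuts by Buss's in-situ method. Each $\Sigma^q_{i+1}$ formula $\exists\bar x\,\psi$ in a cut is traced up through the tree, and the cut is replaced by substituting the eigen-terms (witnessing formulas, which are polynomial-size) found on the $\exists$-introduction side into the $\forall$-side ancestors. This yields a $G_i$-proof (a $G_i^*$ with only $\Sigma^q_i$ cuts) of exponential size whose every sequent is a substitution instance of a polynomial-size sequent. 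It is therefore narrow, and each node is computable by a polynomial-size circuit $C$ from its path address.

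Finally, the correctness of $C$ must be proved by a polynomial-size resolution refutation. Correctness at each node is a local, polynomially checkable property. I make it provable in $\Res$ by having $C$ output enough auxiliary data (extension-variable-style witnesses) that the correctness formula becomes a CNF whose refutation is a short, essentially syntactic verification. For $i=0$ the same construction produces cut-free $G_0$, which I convert to resolution lines, with the circuit describing clauses.

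I expect the main obstacle to be the cut-elimination step. It must be shown that the in-situ elimination keeps every line polynomial-size and uniformly describable. The witness formulas must be nested only boundedly: substitution of eigen-terms that themselves contain earlier substituted terms could cause exponential growth of individual lines. To prevent this, I would first try introducing the witnesses as fresh free variables, with defining equivalences carried as side formulas, rather than substituting them.
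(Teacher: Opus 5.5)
\textbf{Main gap: the direction $G_1\le_p\nimpres(\Res)$.} Your plan here is to run the same cut elimination, obtain ``cut-free $G_0$'', and then turn lines into clauses. This does not work.

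Eliminating the $\Sigma^q_1$ cuts leaves the quantifier-free cuts in place. It also creates new cuts on the witnessed instances $\psi(a/\beta)$. What the construction really yields is an implicit $G_0$ proof \emph{with substitution}, i.e.\ $\nimpres(S_LG_0)$. At level~$0$ you have no way to remove that substitution, because the paper's removal trick cuts on the $\Sigma^q_1$ formula $\exists x(x\equiv\beta)$.

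More fundamentally, formula lines cannot be converted line by line into resolution lines. The large refutation may only contain clauses in the variables of the CNF, so the extension variables that an explicit Frege-to-ER translation would need are unavailable. Putting extension witnesses into $C$, as in Lemma~\ref{lem:EF_to_Res}, only helps the auxiliary proof $\Pi$, not the lines of the encoded refutation.

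The missing idea is the argument of Section~\ref{sec:impres}, which uses no cut elimination at all:
\begin{itemize}
\item Since $T^1_2$ proves the soundness of $G_1$, PLS witnessing gives, provably in $S^1_2$, a PLS instance $Q_\alpha$ with no solution whenever $\alpha\models\phi$.
\item The large refutation derives, for every total assignment $\alpha$, the clause $C_\alpha$ negating $\alpha$. If $\alpha$ falsifies a clause of $\phi$, this is one weakening step. Otherwise it is a chain of repetitions of the empty clause following $N_\alpha$ (nodes outside $D_\alpha$ carry dummy copies of a clause of $\phi$), followed by a weakening.
\item A depth-$n$ binary tree then resolves all the $C_\alpha$ to the empty clause.
\end{itemize}
Local correctness of this object is provable in $S^1_2$ without knowing that $\phi$ is unsatisfiable, so Corollary~\ref{cor:s12_nimpres} applies.

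\textbf{The case $i\ge1$.} Your outline has the right shape, but the obstacle you flag is the heart of the construction, and you leave it open. Substituting witnesses through copies of the right subproof produces nested terms, so a ``substitution instance of a polynomial-size sequent'' need not be narrow. If the output is kept treelike (you write $G_i^*$), the copying can also compound over nested cuts.

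The paper never propagates a substitution through a subproof:
\begin{itemize}
\item After free-variable normal form and conversion to absorption rules, each antecedent occurrence of $\exists\bar x\psi$ is replaced by $\psi(b_{x_1,s},\dots,b_{x_k,s})$. The fresh variables are named by the address $s$ where its descendants end, and this turns every $\exists$-left into a substitution-left.
\item Each $\exists$-right with a quantifier-free auxiliary formula is then repaired by a single $S_LWC$ step using the one right cut premise $t$. This adds a dag edge from $t$ to $w$, and the address ordering guarantees $t<w$.
\item Finally, substitution-left is removed via $a\equiv\beta,\psi(a/\beta)\seq\psi(a)$ and $\seq\exists x(x\equiv\beta)$. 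This is exactly where $i\ge1$ is needed.
\end{itemize}

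\textbf{Two smaller points on the easy direction.} First, the relevant correspondence is between $T^{i+1}_2$ and $G_{i+1}$, not ``$S^1_2$-style reasoning''. The induction has exponential length, and if $S^1_2$ sufficed then, combined with the other direction, $\EF$ would simulate $G_1$. Second, for resolution lines the invariant must be $\alpha\models S_v$ for a fixed $\alpha\models\phi$, since derived clauses are not tautologies.
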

Theorem~\ref{the:main} follows immediately from Lemma~\ref{lem:G_i+1_simulates} and 
Theorems~\ref{the:cutelim_theorem} and~\ref{the:impres} below.
An immediate corollary is that also $G_1 \equiv_p \impres(\Res)$, 
since any implicit resolution proof is automatically
narrow, as its clauses cannot be larger than the formula being proved~\cite{krajicek-implicit}.
Here  a system $P$ polynomially simulates a system~$Q$, or $P \ge_p Q$,
if we can convert any $Q$-proof of a (quantifier-free) propositional tautology into a $P$-proof
of the same tautology in polynomial time. They are polynomially equivalent,
$P \equiv_p Q$, if both directions hold.

One consequence of the theorem is that we now have further examples
of implicit proofs producing stronger proof systems, assuming quite a plausible hypothesis about the increasing hierarchy of fragments of~$G$. In fact we show that even \emph{narrow} implicit proof systems are stronger assuming this hypothesis.

The result that $G_1 \equiv_p \impres(\Res)$ has also recently been shown in independent
work by Fleming, Grosser, Pitassi and Robere~\cite{FGPR}, using an approach broadly similar to ours.

\bigskip

The rest of this section is an outline of the paper.
In Section~\ref{sec:implicit} we introduce narrow implicit proofs and some of their basic properties. We prove
\begin{lemma} \label{lem:G_i+1_simulates}
$G_{i+1} \ge_p \nimpres(G_i)$ for  $i \ge 0$.
\end{lemma}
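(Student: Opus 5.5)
Given a $\nimpres(G_i)$-proof $(\Pi,C)$ of a propositional tautology $\phi$, I build a polynomial-size $G_{i+1}$-proof of $\phi$. Here $C$ is a circuit which, on input an index $u\in\{0,1\}^m$, outputs the $u$-th line of a $G_i$-proof of $\phi$. Each line is a sequent of $\Sigma^q_i$ formulas of polynomial size, together with pointers to the premises and the name of the rule used. $\Pi$ is a resolution refutation of the CNF (with auxiliary variables for the gates of $C$) expressing ``$C$ does not encode a correct proof of $\phi$''.

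\textbf{Step 1 (soundness from $\Pi$).} Since resolution is simulated by Frege and $G_{i+1}\ge_p G_0$, I would convert $\Pi$ into a polynomial-size $G_{i+1}$-proof that, for every $u$, the lines produced by $C$ are locally correct. Concretely: if $C(u)$ is an inference with premises $C(v),C(w)$ with $v,w<u$, then it is a valid instance of a $G_i$ rule, and the last line is $\phi$. This is a propositional statement in free variables $u,v,w$ and the gate variables.

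\textbf{Step 2 (truth predicate).} For $\Sigma^q_i$ formulas (and sequents) of polynomial size encoded by a bit string $s$, there is a $\Sigma^q_{i+1}$ (in fact $\Sigma^q_i$-ish, certainly within $\Sigma^q_{i+1}$ after adding an assignment quantifier) formula $\mathrm{Sat}_i(s,a)$ asserting that the sequent coded by $s$ is true under assignment $a$ to its free variables. $G_{i+1}$ proves its Tarski conditions, i.e.\ that it commutes with connectives and quantifiers, by polynomial-size proofs. I then show that each $G_i$ rule preserves $\mathrm{Sat}_i$ under all assignments. The cut and quantifier rules are the delicate cases, since eigenvariable and witness substitution must be handled within the coding. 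This is standard but must be done uniformly in the code of the line.

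\textbf{Step 3 (induction over exponentially many lines).} This is the main obstacle. I need to derive ``$\forall a\,\mathrm{Sat}_i(C(u),a)$ for all $u$'' by induction on $u\in\{0,1\}^m$, and the induction must use only polynomial-size reasoning. The formula $\forall u'\le u\,\forall a\,\mathrm{Sat}_i(C(u'),a)$ is $\Pi^q_{i+1}$, and $G_{i+1}$ corresponds to $S^{i+1}_2$/$T^i_2$-style reasoning, which supports polynomial induction on $\Sigma^q_{i+1}$ (equivalently $\Pi^q_{i+1}$) formulas but not exponential-length induction. I would therefore avoid linear induction: argue by contradiction with the least number principle. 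Suppose some line is false. A $\Sigma^q_{i+1}$ formula $\exists u\,\exists a\,\neg\mathrm{Sat}_i(C(u),a)$ then holds, and $G_{i+1}$ can find a minimal such $u$ via binary search on the $m$ bits. This takes $m$ cuts on $\Sigma^q_{i+1}$ formulas (``there is a false line with index $<$ prefix''), which is polynomial length and is the familiar way $G_{i+1}$ proves $\Sigma^q_{i+1}$ minimization. At the minimal $u$ the premises $v,w<u$ are true for all assignments. By Step 2 the conclusion is then true, giving a contradiction.

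\textbf{Step 4.} Hence the last line $\phi$ is true under every assignment. Since $\mathrm{Sat}_i$ of the code of the quantifier-free $\phi$ is provably equivalent to $\phi$ itself, by unwinding the Tarski conditions along the fixed formula, I obtain $\phi$ in $G_{i+1}$. For $i=0$ the same argument goes through with $\mathrm{Sat}_0$ being an evaluation of a polynomial-size quantifier-free formula, expressed with $\exists$ over the gate values. I expect the quantifier-complexity bookkeeping in Step 3 to be the main point requiring care.
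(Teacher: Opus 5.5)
\textbf{The gap is in Step 3.} Your outline matches the paper's soundness argument: show that every line of the encoded proof holds under all assignments, using narrowness so that this is a $\Pi^q_{i+1}$ property of the index. But the step you single out as the crux rests on a false premise, and the search you describe is not polynomial. It is the treelike system $G^*_{i+1}$ that corresponds to $S^{i+1}_2$. Daglike $G_{i+1}$ corresponds to $T^{i+1}_2$~\cite{krajicek-pudlak90}.

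The principle you invoke is the existence of a least $u<2^m$ with $F(u):=\exists a\,\neg\mathrm{Sat}_i(C(u),a)$, where $F$ is $\Sigma^q_{i+1}$. That is $\Sigma^b_{i+1}$ minimization over an exponential range, which is equivalent to $T^{i+1}_2$. It does not come from $m$ cuts in the way you suggest. In a binary search the prefix queried at stage $j$ depends on all earlier answers, so with concrete prefixes the search branches into $2^m$ cases. If you instead quantify over the current prefix, the invariant $\exists p\,[\neg\exists u'{<}p\,F(u')\wedge\exists u'\,(p\le u'<p+2^{j}\wedge F(u'))]$ is $\Sigma^q_{i+2}$, not $\Sigma^q_{i+1}$. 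This is the usual loss of one quantifier level: binary search lets $S^{i+1}_2$ minimize $\Sigma^b_i$ properties, not (as far as is known) $\Sigma^b_{i+1}$ ones.

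What rescues the step is exactly the capability you said $G_{i+1}$ lacks. With free variables for the bits of $p$, derive for $j=0,\dots,m$ the sequents
\[
S_j(p):\quad \exists u'\,(p\le u'<p+2^j\wedge F(u'))\ \seq\ \exists u'\,(u'<p\wedge F(u')).
\]
Here $S_0$ is local soundness at index $p$ (your Steps 1--2). Each $S_j$ follows by a few $\Sigma^q_{i+1}$ cuts from the two instances $S_{j-1}(p)$ and $S_{j-1}(p+2^{j-1})$ of a \emph{single} derivation of $S_{j-1}$, with the substitution simulated as in Section~\ref{sec:cut_elim_general}. This reuse is possible only because $G_{i+1}$ is daglike, and it is how $G_{i+1}$ simulates exponentially long $\Sigma^b_{i+1}$/$\Pi^b_{i+1}$ induction. $S_m(0)$ then gives the contradiction.

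The paper avoids this, and also the truth-predicate bookkeeping of your Steps 1, 2 and 4, by carrying out the argument in $T^{i+1}_2$. It uses $\Pi^b_{i+1}$ induction on $m$ with hypothesis ``each of the first $m$ lines is satisfied by every assignment''. It then applies the Kraj\'i\v{c}ek--Pudl\'ak theorem that if $T^{i+1}_2$ proves the soundness of a system, then $G_{i+1}$ polynomially simulates it.
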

This gives the easy direction of both items of Theorem~\ref{the:main},
since clearly $\nimpres(G_0) \ge_p \nimpres(\Res)$. The lemma is proved by 
observing that the standard soundness proof for succinctly-described
exponential-size $G_i$ proofs can be formalized in the bounded arithmetic theory~$T^{i+1}_2$,
which gives a simulation by $G_{i+1}$ by results of~\cite{krajicek-pudlak90}.
At the end of the section we introduce how we will also use bounded arithmetic
for the other direction of the simulation, in which we must construct implicit proofs.

In Section~\ref{sec:unfolding} we discuss a basic step in cut-elimination, of unfolding a daglike proof into 
an exponentially-large treelike proof, and show how  the resulting proof can be ``implicit'' in our sense.

Section~\ref{sec:main} contains the main construction of the paper. We show
\begin{theorem} \label{the:cutelim_theorem}
$G_{i+1} \le_p \nimpres(G_i)$ for  $i \ge 1$,
and 
$G_1 \le_p \nimpres(SG_0)$. 
\end{theorem}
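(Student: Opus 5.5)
We take a $G_{i+1}$-proof $\pi$ of a propositional tautology $\tau$ and build, in polynomial time, a circuit $C$ describing an exponential-size but narrow $G_i$-proof of $\tau$ (for $i=0$, an $SG_0$-proof), together with a resolution refutation of the statement that $C$ encodes a correct proof. The $G_i$-proof is obtained from $\pi$ by eliminating the top layer of cuts.

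\emph{Normalisation.} First we put $\pi$ in a normal form. By standard free-cut elimination for $G_{i+1}$, we may assume every formula in $\pi$ is a subformula of a formula in the endsequent, of the initial sequents, or of a cut formula. We may also assume all cut formulas are $\Sigma^q_{i+1}$ or $\Pi^q_{i+1}$, and that each such cut formula has the prenex form $\exists \bar x\,\psi(\bar x)$ with $\psi \in \Pi^q_i$, after splitting off the quantifier block.

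\emph{Unfolding and cut elimination.} We use Section~\ref{sec:unfolding} to unfold the daglike $\pi$ into a treelike proof $\pi'$ of exponential size. Each node of $\pi'$ is addressed by a path of polynomial length, and its sequent is computed from the path by a small circuit. We then remove the $\Sigma^q_{i+1}$ cuts in the style of~\cite{buss-insitu}. For a cut on $\exists\bar x\,\psi$, the $\exists$-introductions in the left subproof supply witness vectors of formulas $\bar\phi$. In the right subproof we replace the eigenvariables of the $\exists$-left inference by $\bar\phi$, and we insert a copy of the appropriate part of the right subproof at each such introduction point. Processing cuts top-down in tree order, every sequent of the new proof is determined by a polynomial-size address: a path in $\pi'$ plus a polynomial amount of substitution data. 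Its formulas are original subformulas with witness formulas substituted, so each sequent has polynomial size, which is what narrowness requires. What remains contains only $\Sigma^q_i$/$\Pi^q_i$ cuts, so it is a $G_i$-proof. For $i=0$, the witnesses substituted for variables behave like substitution instances, which is why we get $SG_0$ rather than $G_0$. The circuit $C$ maps an address to its sequent, the rule applied, and the addresses of the premises.

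\emph{Correctness certificate.} We must give a polynomial-size resolution refutation of the negated statement ``$C$ encodes a correct $G_i$-proof of $\tau$''. Following the plan announced at the end of Section~\ref{sec:implicit}, we do this by verifying local correctness in a weak bounded arithmetic, roughly $\PV$ or $S^1_2$ with universal statements. There, local correctness means that each node's sequent follows by a valid rule from its premises' sequents, that the root is $\tau$, and that the leaves are axioms. Each check is a polynomial-time property of one address, so it is a $\forall\,\Delta^b_0(\PV)$ statement. Provable statements of this kind translate into polynomial-size resolution (in fact $\Res$-level) refutations of their propositional negations, with the circuit $C$ hardwired.

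\emph{Main obstacle.} The hard step is defining the cut-elimination addressing so that every step is local. We need to show that the sequent at an address, and the premises it points to, are computable without global information about the exponential tree. This is especially delicate when cuts are nested, because witnesses substituted into one subproof may themselves come from a subproof that was already transformed. We would first fix the order of eliminating the top-level cuts, outermost first along each branch. We would then prove that a node's substitution is a composition of polynomially many witness substitutions, each read off from a node of $\pi'$. This keeps formula sizes polynomial, provided substitutions are represented via extension-style variables where needed. If it is not, we fall back on $SG_0$/$G_i$ substitution, which lets us keep the substitutions implicit.
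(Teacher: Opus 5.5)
Your proposal has a genuine gap, at exactly the step you flag as the main obstacle.

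\textbf{Why the copying route fails.} You propose, for each $\exists$-introduction of the cut formula in the left subproof, to attach a copy of the right subproof with its eigenvariables replaced by the witness $\bar\phi$. Once cuts are nested along a branch of $\pi'$, this does not give a narrow single-exponential proof.
\begin{itemize}
\item Kept as a tree, the copying gives in the worst case $N(u)\approx N(L)\cdot N(R)$ nodes. That is doubly exponential in the height of $\pi'$, so nodes cannot have polynomial-length addresses.
\item If you instead identify nodes as `a node of $\pi'$ plus substitution data', the substitutions compose. A witness chosen inside the right subproof of an upper cut may contain that subproof's eigenvariable, which gets instantiated by the upper cut's witness. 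The composite witness is then substituted into the right subproof of a lower cut, and so on for up to $h$ levels. Lines can therefore double in size at each level, and narrowness fails.
\end{itemize}
Extension variables do not exist in $SG_0$. So `keeping the substitutions implicit via the substitution rule' is not a fallback; it is precisely the missing idea. This also shows your explanation of why $SG_0$ appears is off. Genuinely instantiated copies would give a plain $G_0$ proof; the substitution rule is needed only because the right subproof is \emph{shared} rather than copied.

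\textbf{What the paper does instead.} It never instantiates a right subproof.
\begin{itemize}
\item It unfolds $\pi$, renames into free-variable normal form, and passes to absorption rules, so each sequent already contains every formula introduced below it.
\item Each antecedent quantified formula $\exists\bar x\,\psi$ is replaced by $\psi(b_{x_1,t},\dots,b_{x_k,t})$. The fresh variables are named by the address $t$ of the last sequent containing a descendant; for a cut formula this is the right premise. This turns $\exists$-left into substitution-left.
\item All succedent quantified formulas are deleted. Each such cut then collapses to a repetition, and the right premise at $t$ survives as one shared subproof.
\item Consider an $\exists$-right at $w$ whose auxiliary formula $\psi(a/\beta)$ is quantifier-free. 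Claim~\ref{cla:descendant_substitution} shows $\psi(a/\beta)$ is a substitution instance of the matrix at $t$, and absorption gives $\Gamma'\subseteq\Gamma$, $\Delta'\subseteq\Delta$. A single $S_LWC$ step then derives $w$ from $v$ and $t$, adding a DAG edge $t\to w$.
\item The address ordering (right subtree before left) ensures $t$ precedes $w$.
\end{itemize}
Thus the node set is that of the unfolded tree and every line has polynomial size. Your nested compositions exist only implicitly, along paths in the DAG.

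\textbf{Other omissions.}
\begin{itemize}
\item You omit the step needed for $i\ge 1$: substitution-left must be eliminated, otherwise you only get $\nimpres(S_LG_i)$. The paper does this by $\exists$-left on $\exists x\,(x\equiv\beta)$, together with $a\equiv\beta,\phi(a/\beta)\seq\phi(a)$ and $\seq\exists x\,(x\equiv\beta)$, with uniform padding.
\item $S^1_2$ proofs translate into EF proofs, not resolution refutations. The certificate therefore needs Lemma~\ref{lem:EF_to_Res} (equivalently Corollary~\ref{cor:s12_nimpres}).
\item The free-cut normalisation is unnecessary, since formulas of $G_{i+1}$ are already weakly prenex.
\end{itemize}
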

Here $SG_0$ is $G_0$ plus the substitution rule, and we can think of it as a formulation
of the standard \emph{substitution Frege} system (see Section~\ref{sec:Frege}).
The proof of Theorem~\ref{the:cutelim_theorem} is based on cut-elimination. To this end one has to use a modification of the standard cut-elimination algorithm in order to obtain only exponential blow-up. Our construction is similar to the one used by Buss~\cite{buss-insitu}, but we need to take extra care in order to prove that the exponential-size proof has a succinct representation using a polynomial-size circuit,
 and that the soundness of this representation is provable in resolution. 
We prove the~$i{=}0$ case first, and then show how it generalizes to~$i{\ge}1$ and that
in this case one does not need to add the substitution rule.

In Section~\ref{sec:impres} we strengthen the result about $G_1$ in Theorem~\ref{the:cutelim_theorem} to
\begin{theorem} \label{the:impres}
$G_1 \le_p \nimpres(\Res)$.
\end{theorem}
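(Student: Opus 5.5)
We aim to show $G_1 \le_p \nimpres(\Res)$. By Theorem~\ref{the:cutelim_theorem} we already have $G_1 \le_p \nimpres(SG_0)$, so it is enough to show that narrow implicit $SG_0$-proofs can be turned, in polynomial time, into narrow implicit resolution proofs: $\nimpres(\Res) \ge_p \nimpres(SG_0)$. We would do this by composing simulations at the level of the exponential-size encoded proofs. Since $\Res$ does not prove arbitrary tautologies directly, we work with refutations. A narrow implicit $SG_0$-proof of $\phi$ is a circuit $C$ describing an exponentially long sequence of polynomial-size sequents, together with a resolution proof that $C$ is locally correct. From $C$ we build a circuit $D$ describing a resolution refutation of the CNF encoding of $\neg\phi$, where the extension variables introduced are indexed by positions in the exponential proof.

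The construction proceeds in three steps. First, we eliminate the substitution rule. Following the standard simulation of substitution Frege by extended Frege, we replace each substitution instance by a re-derivation of the substituted premise. In the implicit setting this re-derivation is obtained by unfolding the relevant subproof. This unfolding is exactly the kind of exponential but succinctly describable blow-up handled in Section~\ref{sec:unfolding}. Alternatively, we could reuse the observation that in the output of the Section~\ref{sec:main} construction substitutions can be pushed to the leaves and applied to axioms only.

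Second, we translate the resulting implicit $G_0$ (Frege-style) proof into an implicit extended-resolution refutation. For each subformula of each line, indexed by (line number, position), we introduce an extension variable with its defining clauses. Each sequent $\Gamma\seq\Delta$ then becomes a short, fixed-pattern resolution derivation of a clause over these variables. Because every line has polynomial size, every clause of $D$ is also polynomial size, so the proof remains narrow.

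Third, we remove the extension variables. An implicit resolution proof may introduce exponentially many extension variables only if they are eliminable. Our plan is to observe that in $\nimpres(\Res)$ the extension definitions can be treated as additional initial clauses, named by the circuit, and that these are satisfiable given any assignment to the original variables. We would then simply assert that $D$ describes a resolution refutation of $\neg\phi$ together with the definition clauses. Soundness of this representation only requires that the definitions can be extended consistently, which holds because they are acyclic in the index order.

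The main obstacle is the verification: we must produce a polynomial-size $\Res$ proof of the statement that $D$ encodes a correct proof. The plan is to derive it from the given $\Res$ proof that $C$ is correct. Local correctness of each step of $D$ should follow by a fixed polynomial-size $\Res$ argument from the correctness of the corresponding step of $C$, with the circuit for $D$ queried at a symbolic index. The difficult part is handling the acyclic extension definitions, or the substitution unfolding, within pure resolution reasoning. If direct propositional verification is awkward, we would formalize local correctness in a weak bounded arithmetic ($PV$- or $S^1_2$-style with $\Sigma^b_0$ induction only over fixed depth) and translate it, as sketched at the end of Section~\ref{sec:implicit}.
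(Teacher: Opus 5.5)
\paragraph{Gap: Steps 1--3 do not produce a $\nimpres(\Res)$ proof.} The reduction to $\nimpres(SG_0)\le_p\nimpres(\Res)$ via Theorem~\ref{the:cutelim_theorem} is not wrong in itself. That simulation does hold, but the paper obtains it only as a consequence of this theorem together with Lemma~\ref{lem:G_i+1_simulates}. The problem is that your steps do not yield a proof in the target system.

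\emph{Steps 2 and 3.} In $\nimpres(\Res)$ the circuit must encode a resolution refutation \emph{of the given CNF}: its first lines are exactly the clauses of that CNF, and every clause uses only its variables. Introducing exponentially many extension variables indexed by (line, position), and declaring their definitions to be ``additional initial clauses'', gives a narrow implicit \emph{extended} resolution refutation. That is a different system, and the issue is membership in the system, not soundness of the encoding. This cannot be patched cheaply: your Step 3 would apply verbatim to narrow implicit $\EF$ proofs. It would then give $\nimpres(\Res)\ge_p\nimpres(\EF)\ge_p G$, and hence $G_1\ge_p G$ by Lemma~\ref{lem:G_i+1_simulates}. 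You seem to be transferring Lemma~\ref{lem:EF_to_Res}, but there extension variables are removed from the \emph{auxiliary} verification proof by absorbing them into the circuit. Nothing comparable works for variables inside the encoded proof.

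\emph{Step 1.} Section~\ref{sec:unfolding} unfolds a polynomial-size explicit DAG. Pushing nested substitutions through an already exponential implicit proof is different: it needs copies indexed by chains of nested substitutions, of which there can be doubly exponentially many. The composed substitution instances also need not have polynomial size, so narrowness is lost. The usual $\SF$-to-$\EF$ simulation comes from $S^1_2$ proving soundness of $\SF$ by $\Pi^b_1$ length induction over the lines. With exponentially many lines this becomes full $\Pi^b_1$ induction, which is exactly the $T^1_2$ argument of Lemma~\ref{lem:G_i+1_simulates}. This is why the paper removes substitution only for $i\ge 1$, using a cut on the $\Sigma^q_1$ formula $\exists x(x\equiv\beta)$, and keeps $SG_0$ when $i=0$. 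Nothing in Section~\ref{sec:main} shows that substitutions can be pushed to the axioms.

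\paragraph{The missing idea: do not translate proof lines at all.} The paper uses that $T^1_2$ proves the soundness of $G_1$, together with the PLS witnessing theorem. This gives, provably in $S^1_2$, a PLS instance $Q_\alpha$ with no solution whenever $\alpha$ satisfies $\phi$. The encoded refutation contains, for each total assignment $\alpha$, a derivation of the clause $C_\alpha$ negating $\alpha$:
\begin{itemize}
\item if $\alpha$ falsifies $\phi$, one weakening of a falsified clause of $\phi$;
\item otherwise, a chain of repetitions of the empty clause along the sinkless graph given by $N_\alpha$, with dummy copies of the first clause of $\phi$ at nodes outside $D_\alpha$, followed by a weakening.
\end{itemize}
A binary tree of resolution steps over all the $C_\alpha$ then yields the empty clause. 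Every line is a clause in the original variables. Local correctness is provable in $S^1_2$ without ever proving that $\alpha$ falsifies $\phi$, so Corollary~\ref{cor:s12_nimpres} applies. The $T^1_2$-strength reasoning is hidden in the structure of the encoded refutation, and this is the ingredient your plan lacks.
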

The proof is independent of Theorem~\ref{the:cutelim_theorem}
and does not (directly) use cut-elimination, but rather
known connections between $G_1$, the bounded arithmetic theory $T^1_2$, and TFNP search
problems in the class PLS.
The proof constructed here perhaps sheds some light on the nature of implicit proofs,
since it has a structure that does not resemble any usual concrete proof.

At the end of the section we suggest a generalization of Theorem~\ref{the:impres} 
to~$G_i$ for~$i \ge 1$, different from the one in Theorem~\ref{the:main}
in that it is about the complexity of the definition of the proofs, rather than about the 
expressivity of the lines.
Ideally, we would like to claim that $G_{i+1}$ is equivalent to 
a system one might call $\Sigma^p_i$-$\nimp(\Res)$,  a generalization of implicit resolution in which
we equip the circuit defining the large resolution refutation with gates that make $\Sigma^p_i$ oracle calls.
We prove a technical result in this direction, but it is not clear how to define such a system naturally.

Some notation: we write $[a]$ for the interval $[0, a{-}1]$.

%%%%%%%%%%%%%%%%%%%%%%%%%%%%%%%%%%%%%%%%%%%%%%%%%%%%%%%
%%%%%%%%%%%%%%%%%%%%%%%%%%%%%%%%%%%%%%%%%%%%%%%%%%%%%%%
\section{Proof systems} \label{sec:systems}
%%%%%%%%%%%%%%%%%%%%%%%%%%%%%%%%%%%%%%%%%%%%%%%%%%%%%%%
%%%%%%%%%%%%%%%%%%%%%%%%%%%%%%%%%%%%%%%%%%%%%%%%%%%%%%%

In general our proofs will be in versions of Gentzen's sequent calculus $LK$, as presented 
for example in~\cite{buss-handbook}, adapted for particular classes of formulas.
We need to be especially careful with the formalizations, as some details, which normally do not matter,
may play an important role when we use succinctly presented proofs. Therefore we have to say  precisely which 
calculus we will use.
In Section~\ref{sec:main} we will also introduce some auxiliary rules, which will not appear in the 
final constructed proof but which will make the process of implicit cut-elimination smoother, 
by for example combining some rules together to postpone having to add extra vertices to a proof tree.

\subsection{Resolution}

A resolution refutation of a CNF $F$ is a sequence of clauses, where each clause
uses only variables from $F$ and has no repeated variables.
The sequence starts with the clauses of $F$ and ends with the empty clause.
Each clause is derived from
earlier clauses either by weakening or by the resolution rule
that derives $C \vee D$ from $C \vee p$ and $D \vee \neg p$,
where we do not allow clashing literals in $C$ and $D$ and we remove repeated literals
from $C \vee D$.

\subsection{The systems $G$ and $G_i$}

We describe the system~$G$ for reasoning with quantified Boolean formulas.
This is based on  the sequent calculus $LK$, using essentially all rules, including quantifier rules. 
However we have to adapt how the $\exists$-right and $\forall$-left rules work, since these usually introduce
a bound variable to replace a first-order term, and in our propositional calculus there are no terms.
In this paper we will use quantifier-free formulas in place of terms in~$LK$ in the quantifier rules; 
see the definitions below. 
Systems with quantifier rules in which quantified propositions are allowed have also been considered 
\cite{krajicek-pudlak90} and these are polynomially equivalent to the system with quantifier-free formulas.
Furthermore, it has been shown that it suffices to use the truth constants instead of general formulas, 
see~\cite{morioka,cook-nguyen}.

\sloppypar 
We distinguish between bound variables $x,y,z\dots$ and free variables $a,b,c\dots$.
Cedents are sequences of formulas which may contain repetitions. 
We think of them as sequences rather than sets because we will need to distinguish
between different occurrences of a formula in a cedent; for example
it may occur both as a principal formula and a side formula.
Rather than having
an explicit exchange rule, we allow these sequences to be arbitrarily reordered inbetween rules
(so effectively a cedent is a multiset).

We specify a technical restriction on the syntax of quantified formulas: in a well-formed formula,
for each bound variable~$x$, quantifiers over~$x$
cannot occur inside the scope of another quantifier over~$x$.

The axioms and rules of $G$ are as follows. See e.g.~\cite{buss-handbook} for further details.
\bi
\item 
Axioms are all sequents of the form $\Gamma,A\seq\Delta,A$.
\item 
We have the weakening and contraction rules.
A special case of  weakening is the \emph{repetition} rule, where we derive a sequent from
itself with no changes. We will use this as a  padding to make large proofs uniform.
\item 
We have the cut rule, and the standard left and right introduction rules for the propositional
connectives $\bigwedge$, $\bigvee$, $\neg$.
\item
The quantifier rules are
\[
\exists\textrm{-left: } 
\frac{\Gamma, \phi(a) \seq \Delta}{\Gamma, \exists x  \phi(x) \seq \Delta}
\qquad
\exists\textrm{-right: } 
\frac{\Gamma \seq \Delta, \phi(a/\beta)}{\Gamma \seq \Delta, \exists x  \phi(x)}
\]
\[
\forall\textrm{-left: } 
\frac{\Gamma, \phi(a/\beta) \seq \Delta}{\Gamma, \forall x  \phi(x) \seq \Delta}
\qquad
\forall\textrm{-right: } 
\frac{\Gamma \seq \Delta, \phi(a)}{\Gamma \seq \Delta, \forall x  \phi(x)}
\]
where $\beta$ is a quantifier-free formula, and where in the 
$\exists$-left and $\forall$-right rules the free variable $a$ does not occur in $\Gamma$ or $\Delta$
(the \emph{eigenvariable condition}).
The notation $\phi(a/\beta)$ means we
substitute every occurrence of~$a$ in~$\phi$ with $\beta$.
Thus by these rules we can replace some,
but not necessarily all, instances of $\beta$ with the bound variable~$x$.
\ei

A $G$ proof of a quantifier-free formula $\phi$ is a proof in the above sequent calculus
ending in the sequent $\seq \phi$.
A $G_i$ proof is a $G$ proof in which all formulas are in $\Sigma^q_i$.
We will also assume that in $G_i$ proofs all formulas are in \emph{weakly prenex form}, 
which means that they are either in $\Sigma_{i-1}^q\cup\Pi_{i-1}^q$, 
or of the form $\exists x_1\dots\exists x_k\phi$ with $\phi\in\Pi_{i-1}^q$.

\subsection{Frege and its extensions} \label{sec:Frege}

A \emph{Frege} proof system is one in which proof lines are 
(unquantified) propositional formulas derived using  rules given by schemes. 
This notion is very robust and any two
 such complete systems are polynomially equivalent. 
We will abuse notation slightly and take $G_0$ to be a Frege system,
even though the lines are formally sequents rather than formulas.

\emph{Substitution Frege} systems are Frege systems augmented with a 
substitution rule that allows one to derive a substitution instance from any formula. 
We will consider a sequent calculus version, which is $G_0$ plus the substitution rule
\bel{e-substitution}
\frac{\Gamma\longrightarrow\Delta}{\Gamma[b/\alpha]\longrightarrow\Delta[b/\alpha]},
\ee
where $b$ is a free variable and $\alpha$ is a quantifier-free formula. 
The polynomial simulation between Frege systems and the propositional sequent calculus readily extends to the systems with substitution rules. Again slightly abusing notation, we take substitution Frege to be this system and abbreviate it as~$\SF$
(in fact we will only need a restricted form of this rule, where we substitute into a single formula in the antecedent).
% and assume that the family $SF$ also contains this proof system.

\emph{Extended Frege}, EF, adds to Frege the ability to introduce, for any formula~$\phi$, a new propositional
variable symbol $e_\phi$ and the \emph{extension axiom} $e_\phi \longleftrightarrow  \phi$.
 This potentially allows us to substantially
shorten proofs. It is well-known that SF and EF polynomially simulate each 
other~\cite{dowd,krajicek-pudlak89}.
We will use EF as an auxiliary proof system but will not consider large, implicit EF proofs\footnote{
For this reason, we do not give full details about the formalization of EF.};
for example Theorem~\ref{the:main} is about $\nimpres(\SF)$ rather than $\nimpres(\EF)$.
This is even though EF is very common in the literature while SF is  rare.
The reason is that the ability to abbreviate formulas by extension variables 
means that in EF we only ever need narrow proof-lines, so
$\nimpres(\EF)$ is polynomially equivalent to $\impres(\EF)$~\cite{krajicek-implicit}
which is, most likely, a much stronger proof system than $\nimpres(\SF)$,
since it polynomially simulates the full system~$G$~\cite{krajicek-implicit}.

\subsection{Bounded arithmetic}

We will use the bounded arithmetic theories $\PV$, $S^1_2$ and $T^1_2, T^2_2, \dots$. We take these all 
to be in the language $L_\PV$, which has a symbol for every polynomial time machine defining a 
function or relation on binary strings. Where appropriate we identify strings with natural numbers.  
An important symbol is $|x|$ for (a standard machine defining) the length
of the string~$x$.
A $\Sigma^b_1$-formula
has the form $\exists{y {<} t(x)} \, \phi(x,y)$, where $\phi$ is a quantifier-free $L_\PV$-formula and~$t$ is a $L_\PV$ term.
We define $\Pi^b_1$ etc.  similarly.
The theory $\PV$ then consists of universal axioms fixing the standard properties of polynomial-time machines.
The theory $S^1_2$ adds to this the \emph{$\Sigma^b_1$ length-induction} scheme
\[
\phi(0) \wedge \forall i {<} |x| \, [\phi(i) \rightarrow \phi(i+1)] \
\longrightarrow \
\phi(|x|)
\]
where $\phi$ is $\Sigma^b_1$ and may contain other parameters.
The theory $T^1_2$ strengthens this further to the full $\Sigma^b_1$ induction scheme,
which has the same form except that the length-bound $|x|$ is replaced with simply $x$,
so that exponentially longer inductions are available. 
The theory $T^i_2$ generalizes this to the full induction scheme
for~$\Sigma^b_i$ formulas.
For details see
e.g.~\cite{buss-book, krajicek-book2}.

The theories $\PV$ and $S^1_2$ are very similar in strength and are closely connected
with the propositional system EF, in that a proof in either
 of a universal $L_\PV$ formula $\forall x \phi(x)$ should be thought of as
a very succinct description of a family of polynomial-size EF proofs
of the translation of $\forall x \phi(x)$ into a family of propositional tautologies. 
The theory $T^i_2$ has a similar relation with~$G_i$, and we will 
also use a connection between $T^1_2$ and the TFNP 
class~PLS~\cite{krajicek-pudlak90, buss-krajicek}.

We will use the translation of $S^1_2$ into EF as a tool for constructing implicit EF proofs.
A rule of thumb for proving things in $S^1_2$ 
is that any properties of a polynomial-time algorithm which can be proved only using
elementary combinatorics can be formalized in~$S^1_2$.

%%%%%%%%%%%%%%%%%%%%%%%%%%%%%%%%%%%%%%%%%%%%%%%%%%
%%%%%%%%%%%%%%%%%%%%%%%%%%%%%%%%%%%%%%%%%%%%%%%%%%
\section{Implicit and narrow implicit proofs} \label{sec:implicit}
%%%%%%%%%%%%%%%%%%%%%%%%%%%%%%%%%%%%%%%%%%%%%%%%%%
%%%%%%%%%%%%%%%%%%%%%%%%%%%%%%%%%%%%%%%%%%%%%%%%%%

The restricted version of $\imp(P,Q)$, in which lines in the implicit $Q$-proof are polynomial size,
 was defined by Kraj\'i\v cek in~\cite{krajicek-implicit} with the notation $[P,Q]^m$.
 We will denote it by $\nimp_P(Q)$ and call it \emph{the narrow implicit proof system defined by $P, Q$}. 
 
 Unlike the general concept of implicit proofs, narrow implicit proof systems are only defined when the proof system $Q$ is based on formulas. More precisely, we will say that $Q$ is a \emph{formula-based proof system}, if $Q$ is given by a class of syntactic objects $\cal F$, called \emph{formulas}, and a finite set of relations~$D$ on~$\cal F$, 
 called \emph{deduction rules}, such that a $Q$-proof of $\phi$ is a sequence $\Pi=(\phi_1\dts\phi_t)$
  such that for every $i$, $\phi_i$ is in $\cal F$ and follows from some $\phi_{i_1}\dts\phi_{i_k}$ with $i_1\dts i_k<i$ by a deduction rule (formally, $(\phi_{i_1}\dts\phi_{i_k},\phi_i)\in R$ for some $R\in D$). Furthermore, we assume that formulas are encoded as binary strings and the relations defining the rules are decidable in polynomial time (assuming this encoding of formulas).
%We will also need to encode explicitly the indices $i_1\dts i_k<i$ of formulas $\phi_{i_1}\dts\phi_{i_k}$ from which the i-th formula follows for every~$i$.

Formula-based proof systems may be viewed as a generalization of Frege systems. We leave it to the reader to define refutations and proofs from assumptions in these systems. 
In most of the proof systems used in this paper, the lines are strictly sequents rather than formulas.
In order to view them as formula based proof systems, 
we will treat sequents as formulas and deduction rules as being applied to sequents.
If we write a  sequent as a conjunction of formulas implying a disjunction of formulas, then the resulting calculus will be a generalization of a Frege calculus where the inferences depend also on subformulas. Such rules are called
\emph{deep inference rules}
 (although in our case  they only manipulate subformulas one or two levels down).

\bdf
Let $P$ be an arbitrary proof system and $Q$ a formula-based proof system with a class of formulas $\cal F$ and a set of deduction rules $R_1\dts R_l$. A proof of a formula $\phi$ in $\nimp_P(Q)$,
the narrow implicit proof system defined by $P, Q$, is a pair $(\Pi,C)$ where $C$ is a circuit computing a Boolean function $f_C:\{0,1\}^n\to\{0,1\}^m$ and $\Pi$ is a $P$-proof of the formula $\gamma_{C,\phi}$ that says that $f_C$ correctly encodes a $Q$-proof of~$\phi$. In more detail,~$\gamma_{C,\phi}$ should express the following condition,
where we view strings in~$\{0,1\}^n$ as numbers in the interval $[2^n]$:
\bi
\item 
For $i\in \{0,1\}^n$,  the bit-string $C(i)$ output by $C$  encodes a tuple 
$(\phi_i;i_1\dts i_k;j)$ of a formula, numbers $i_1\dts i_k<i$ and a number~$j$,
 such that $\phi_i$ follows from $\phi_{i_1}\dts\phi_{i_k}$ using rule~$R_j$
\item 
$\phi$ is the final string $C(2^n{-}1)$ (but see below).
  %expresses that  $\phi$ is true; or some similar condition.
\ei
%For formalizing $C$ in the propositional calculus, we use variables for every vertex of $C$ and clauses expressing that the values correspond to the gates at the vertices; for comparing numbers, we use the standard formula, for checking that $\phi_i$ follows from $\phi_{i_1}\dts\phi_{i_k}$ using rule~$R_j$, we use a specific circuit for each $j=1\dts l$ and represent it as a formula using extension variable (in the same way as for $C$).
\edf

This second condition is a template that needs specification in concrete cases of  a formula-based system,
keeping in mind that we are blurring the distinction between formulas and sequents.  
In the case of a $G$ proof of a quantifier-free formula~$\theta$,
the last line should strictly encode, as a formula, the sequent  $\longrightarrow \theta$.
In the case of resolution,
where we are refuting some CNF $D_1 \wedge \dots \wedge D_m$ and where lines are clauses, the condition should 
rather be that the first $m$ lines
are the clauses $D_1, \dots, D_m$ and the last line is the empty clause.

Let us say more explicitly how we formalize $\gamma_{C, \phi}$. 
Think of the algorithm that, on input~$i$, does the following.
First it checks the second condition by computing $C(2^n{-}1)$ (regardless of $i$)
 and halting with output~$0$ if this does not satisfy the condition.
Then, to check the first condition at~$i$, it computes~$C(i)$;
checks that it encodes a tuple $(\phi_i;i_1\dts i_k;j)$; 
checks~$i_1\dts i_k<i$; computes all strings $C(i_1), \dots , C(i_k)$;
recovers the formulas $\phi_{i_1}, \dots, \phi_{i_k}$; and finally outputs~1 
if~$R_j(\phi_{i_1}, \dots, \phi_{i_k}, \phi)$ holds and~0 otherwise.
  We can compute
 this algorithm by a circuit $D$ with $n$ inputs (for the bits of~$i$). 
 The formula $\gamma_{C, \phi}$ expresses that $D$ outputs~1 on all inputs.
 This is a tautology if~$C$ encodes a syntactically correct proof.
%
%In particular, it is necessary to define  $\gamma_{C,\phi}$ explicitly.
%The formula $\gamma_{C, \phi}$ expresses a universal property of the circuit~$C$ on tuples of
%inputs, and to define it we also  assume we have circuits expressing the rules $R_i$ and the property
%of being a formula in~$\cal F$. 

We would like to use resolution as our auxiliary proof system. This is formally a refutational system,
so we specify a CNF $\neg \gamma_{C, \phi}$ that we must refute 
rather than a formula $\gamma_{C, \phi}$ that we must prove.
The CNF $\neg \gamma_{C, \phi}$ has variables for
the bits of $i$ and for the computation of $D$ on this input, and expresses 
``there is a computation of $D$ which outputs $0$ on this input''.
Here we formalize a computation of a Boolean circuit using a CNF in the usual way. 
That is, we introduce a variable $x$ for every vertex $g$ of the circuit, to express the Boolean value computed at this vertex. 
If $g$ is an input, we introduce clauses saying that $x$ is equivalent to the appropriate input variable.
Otherwise, we introduce clauses that determine the value of~$x$ from the values~$y,z$  of its predecessor gates in~$D$. 
If the gate at $g$ is an~AND, then the clauses state that $x$ is equivalent to $y\wedge z$, and similarly for other gates.

\subsection{From extended Frege to resolution}

\emph{Extended resolution}, ER, is a proof system that adds to resolution the ability to introduce, for any
two literals $x, y$, a new variable $z$ and three clauses
$\neg x \vee \neg y \vee z$, $\neg z \vee x$ and $\neg z \vee y$ expressing together that $z \equiv x \wedge y$. 
It is easy to see that ER is polynomially equivalent to EF. 
On the other hand, these extension clauses are exactly the same as the clauses we just described that
force the value at an AND gate in a circuit to be computed correctly. We can use this observation
to show that resolution and extended resolution, and thus extended Frege, are equivalent as auxiliary
proof systems in implicit proofs.

The following lemma (there for $\impres$ rather than $\nimpres$)  is from~\cite{wang}.

\begin{lemma}  \label{lem:EF_to_Res}
Let $P$ be any proof system. Then $\nimpres(P) \equiv_p \nimp_\EF(P)$.
\end{lemma}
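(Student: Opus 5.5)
The inclusion $\nimp_\EF(P) \ge_p \nimpres(P)$ is immediate, because $\EF$ polynomially simulates resolution. Take an $\nimpres(P)$-proof $(\Pi, C)$, where $\Pi$ is a resolution refutation of the CNF $\neg\gamma_{C,\phi}$. Translated into $\EF$, $\Pi$ gives a proof of $\gamma_{C,\phi}$, or of whatever formula version of it the definition of $\nimp_\EF(P)$ uses. The circuit $C$ stays unchanged. The only bookkeeping is to fix how $\gamma_{C,\phi}$ is stated as a formula for $\EF$: I would express "$D$ outputs $1$" using extension variables for the gates of $D$. Then the correspondence with the CNF encoding is direct, and each clause of $\neg\gamma_{C,\phi}$ becomes a trivially provable extension-axiom consequence.

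The content is the other direction, $\nimpres(P) \ge_p \nimp_\EF(P)$. Let $(\Pi, C)$ be given, with $\Pi$ an $\EF$-proof of $\gamma_{C,\phi}$. First, convert $\Pi$ into an extended resolution refutation $\Pi'$ of $\neg\gamma_{C,\phi}$; this uses the polynomial equivalence of $\EF$ and $\mathrm{ER}$. $\Pi'$ introduces new variables $z_1,\dots,z_s$, each defined by three clauses saying $z_j \equiv x_j \wedge y_j$ for earlier literals $x_j, y_j$. Second, absorb these definitions into the circuit. Build a new circuit $C'$ with $f_{C'} = f_C$, obtained from $C$ by adding $s$ dummy gates $g_j = \ell(x_j) \wedge \ell(y_j)$. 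Here $\ell(\cdot)$ is the gate corresponding to the literal (a negation gate is added if needed). These gates do not feed any output. Since the verification circuit $D'$ for $C'$ is built from copies of $C'$, the CNF $\neg\gamma_{C',\phi}$ contains, for each $j$, a variable for $g_j$ together with exactly the three extension clauses for $z_j$. The variables of $\neg\gamma_{C,\phi}$ refer to the input $i$ and to the gates of $D$, and every literal $x_j$ is either such a variable or an earlier $z_{j'}$. So the definitions can be realized in order, with each $g_j$ placed in the copy of $C$ inside $D'$ that is evaluated on the input $i$ (or at a fixed input). Then $\Pi'$, after renaming $z_j$ to the gate variable of $g_j$, is a plain resolution refutation of $\neg\gamma_{C',\phi}$. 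Its remaining clauses are clauses of $\neg\gamma_{C',\phi}$, because the gate clauses of $D$ are contained in those of $D'$.

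I expect the main obstacle to be this last matching step. The extension variables in $\Pi'$ may be defined from variables living in any of the several copies of $C$ inside $D$: the copy computing $C(i)$, the copies computing $C(i_1),\dots,C(i_k)$, and the copy computing $C(2^n{-}1)$, as well as the rule-checking gates. Adding a dummy gate to $C$ places a copy of it in each copy of $C$ inside $D'$, but only with inputs from that copy. My first fix is to add the dummy gates to the checking part of $D$ instead. If the format forces $D$ to be generated canonically from $C$, I would instead give $C$ extra dummy inputs and outputs so that all relevant variables are visible within one copy. Since the formalization of $\gamma_{C,\phi}$ is up to us, I would fix $D$'s construction so that adding dummy gates to $C$ yields exactly the needed clauses. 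Everything in the construction is polynomial time, and narrowness is preserved because $f_{C'} = f_C$.
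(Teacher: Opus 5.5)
Your easy direction, and your first step of turning $\Pi$ into an ER refutation whose extension variables $z_j$ are defined from variables of $\neg\gamma_{C,\phi}$, match the paper. The gap is the matching step you flag yourself, and none of your proposed fixes closes it.

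A dummy gate added to $C$ appears in $\neg\gamma_{C',\phi}$ only as a gate inside some copy of $C'$, and its inputs must be gates of that same copy. An extension variable $z_j$, however, may be defined from gates of other parts of $D$: the copy computing $C(i_1)$, the copy computing $C(2^n{-}1)$, or the rule-checking part. None of these is visible from the copy evaluated at $i$. Your fixes do not get around this:
\begin{itemize}
\item Adding gates to ``the checking part of $D$'' is not available. $D$, and hence $\gamma_{C',\phi}$, is determined canonically from $C'$ and $\phi$ by the fixed verification algorithm, and a $\nimpres(P)$ proof consists only of the pair $(\Pi',C')$.
\item For the same reason you cannot re-tailor the construction of $D$ to a particular ER proof. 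The format of $\gamma$ is fixed in advance, uniformly for all circuits, and the lemma is about that fixed system.
\item Extra dummy inputs or outputs of $C$ do not help. In $D$ every copy of $C$ has its inputs wired to an address and nothing else, and the checker ignores padding in the outputs.
\end{itemize}

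The missing idea is recomputation plus an equivalence argument. Let $C'$, on input $i$, first run the entire verification circuit $D$ on $i$. This run includes its own copies of $C(i)$, $C(i_1),\dots,C(i_k)$, $C(2^n{-}1)$ and the checker, together with extension gates computing the $z_j$ from these. $C'$ discards the result of this run and then outputs $C(i)$. Now every $z_j$ is a gate in the copy of $C'$ at input $i$. But it is defined over a \emph{duplicate} of the variables of $\neg\gamma_{C,\phi}$, not over the originals, which now live in the output parts of the copies of $C'$ and in the checker.

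So you also need, for each gate $g$ of $D$, a polynomial-size resolution derivation of $v_g \leftrightarrow v'_g$ between the original and duplicated gate variables. This goes by induction along a topological order of $D$: both are computations of the same circuit from the same input variables $i$, and bounded fan-in makes each step constant size. From these equivalences and the gate clauses of the extension gates you derive the three extension clauses for each $z_j$ over the original variables. After that the resolution part of the ER refutation goes through unchanged. This is what the paper means by using ``the relation between the structure of $D'$ and the structure of $D$''.
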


\begin{proof}
The hard direction is 
$\nimp_\EF(P) \le_p \nimpres(P)$.
So suppose we are given a $\nimp_\EF(P)$ proof of a formula~$\phi$,
that is a pair $(\Pi, C)$ where $\Pi$ is an $\EF$ proof of the formula $\gamma_{C, \phi}$
described above, expressing that $C$ encodes a $P$-proof of~$\phi$. 
We may assume $\neg \gamma_{C, \phi}$ is a CNF.

From $\Pi$ we construct in polynomial time an ER refutation $\Pi_\mathrm{ER}$ of $\neg \gamma_{C, \phi}$.
We may assume that $\Pi_\mathrm{ER}$ has the form of a sequence $\Delta$ of extension axioms 
defining extension variables $x_1, \dots, x_m$ from the variables of $\neg \gamma_{C, \phi}$, 
followed by a resolution refutation $\Pi_\mathrm{Res}$ of $\neg \gamma_{C, \phi} \cup \Delta$.
By construction each variable in $\neg \gamma_{C, \phi}$ 
represents the value computed at some gate in the circuit~$D$ which checks
correct behaviour of~$C$ on input~$i$. Therefore for each new extension variable~$x_j$
introduced in $\Delta$, we can add a gate to~$D$ which computes the value of~$x_j$ based on
the axioms in~$\Delta$. 
We call the resulting circuit~$D'$.

Let $C'$ be the circuit which, on input $i$, first runs the circuit~$D'$ on $i$, throws away
any output, and then runs the circuit~$C$ on~$i$ and outputs~$C(i)$.
Then any computation of $C'$ on~$i$ can be viewed as a computation of $C$ on~$i$,
if we ignore the unused preliminary run of~$D'$,
so that with some renaming of variables $\Pi_{\mathrm{Res}}$ is still 
a resolution refutation of $\neg \gamma_{C', \phi} \cup \Delta$
(where we have replaced $C$ with $C'$ in the subscript).
But now each extension variable introduced in $\Delta$ is associated with some
node~$u$ of $D'$. We can use variable at~$u$ in the computation of $C'$ as
the extension variable, and using the relation between the structure of $D'$ and the structure
of $D$ we can derive the corresponding extension clause. In this way we can construct
from $\Pi_{\mathrm{Res}}$ a resolution refutation $\Pi'$ of $\gamma_{C', \phi}$
which does not introduce any formal extension variables. Then $(\Pi', C')$
is the required $\nimpres(P)$ refutation of $\phi$.
\end{proof}

\subsection{$G_{i+1}$ simulates $\nimpres(SG_i)$}

Equipped with the definitions we can now prove one direction of our main theorem.
We prove Lemma~\ref{lem:G_i+1_simulates},
that $G_{i+1} \ge_p \nimpres(SG_i)$ for $i \ge 0$.

We use a (tight) connection between $G_{i+1}$ and the bounded arithmetic theory~$T^{i+1}_2$.
The soundness of a proof system $P$ is the statement that, given a triple $(\phi, \pi, \alpha)$, if $\pi$
is a $P$-proof of the formula $\phi$ and $\alpha$ is an assignment to the variables of $\phi$,
then $\alpha \vDash \phi$. It follows from Theorem~9.3.17 of~\cite{krajicek-book1} (originally proved in \cite{krajicek-pudlak90}) that, if~$T^{i+1}_2$ proves
the soundness of $P$, then $G_{i+1}$ simulates~$P$.

We describe a proof of soundness of $\nimpres(SG_i)$ which can be formalized in $T^{i+1}_2$.
Let $(\Pi,C)$ be a $\nimpres(SG_i)$ proof ending with the sequent~$\seq\phi$. 
For easier reference, we denote by $\Sigma$ the large proof defined by~$C$. 
Thus $\Pi$ is a resolution proof of the fact that the steps of $\Sigma$ (as defined by $C$) are sound. 
Since even $S^1_2$ proves the soundness of (explicit) resolution proofs, working in $T^{i+1}_2$
we can see that all steps in~$\Sigma$ are correct. 
We will now use $\Pi^b_{i+1}$ induction on~$m$, with the inductive hypothesis 
``each of the first~$m$ sequents of $\Sigma$
is satisfied by every assignment to its free variables''. 
Here it is important that the implicit proof is narrow, as we can write a $\Pi^b_{i+1}$ formula that 
evaluates a polynomial-size formula (here expressing a sequent) 
over all assignments, while we would not know how to do this for
an exponential-size formula.

For the induction to go through, we need to show that the axioms are satisfied by all assignments and that this property is preserved by the steps in proof~$\Sigma$. Consider the case when a sequent ${\Gamma[b/\alpha]\longrightarrow\Delta[b/\alpha]}$ is derived from ${\Gamma\longrightarrow\Delta}$ using the substitution rule, the only interesting case. One can easily see that
our theory can prove that the satisfiability by all assignments is preserved; indeed, given an assignment for the variables of  ${\Gamma[b/\alpha]\longrightarrow\Delta[b/\alpha]}$,
we only need to evaluate $\alpha$ and use it in the assignment for ${\Gamma\longrightarrow\Delta}$.

Thus in particular the last sequent $\seq\phi$ is a tautology. Hence the original proof  $(\Pi,C)$ is sound.
% The actual proof in $T^1_2$ will not talk directly about sequents in $\Sigma$, but about the indices of sequents in~$\Sigma$. So the induction will proceed over the indices.

\subsection{From bounded arithmetic to implicit proofs}

For the other direction of the main theorem, where we need to construct implicit proofs, we will use
the  following observation, essentially from~\cite{krajicek-implicit}.

\begin{lemma} \label{lem:s12-implicit}
Let $P$ and $Q$ be proof systems where $Q$ is formula-based. 
Suppose there is a $\PV$-term $f$ and a polynomial $p$ such
that, provably in $S^1_2$, if any string $\pi$ is a $P$-proof of a formula $\phi$
then the sequence of strings $f(\pi, 0), \dots, f(\pi, 2^{p(|\pi|)}{-}1)$ is a $Q$-proof of $\phi$. 
Then $P \le_p \nimp_{\EF}(Q)$.
\end{lemma}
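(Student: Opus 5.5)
Given a $P$-proof $\pi$ of $\phi$, I will output the pair $(\Pi, C)$ with $C = C_\pi$ and $\Pi$ an $\EF$-proof of $\gamma_{C_\pi,\phi}$; both will be constructed in polynomial time.

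\emph{The circuit.} Let $n = p(|\pi|)$. Take $C_\pi$ to be the standard polynomial-size circuit, with $n$ inputs, computing $i \mapsto f(\pi,i)$ with $\pi$ hardwired. The circuit is arranged so that each output is padded into the tuple format $(\phi_i; i_1,\dots,i_k; j)$ required by the definition of $\nimp$. This padding is harmless: we may assume without loss of generality that $f$ already outputs the rule index and premise indices, since these are polytime recoverable from the $Q$-proof by brute search only if small. So instead I will modify $f$ inside $\PV$ so that it outputs the annotated tuple, and adjust the $S^1_2$ proof accordingly.

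\emph{The formal statement.} The hypothesis gives an $S^1_2$-proof of the universal statement
\[
\forall \pi\,\forall \phi\,\forall i<2^{p(|\pi|)} \bigl[\mathrm{Prf}_P(\pi,\phi) \rightarrow \mathrm{Check}_Q(f,\pi,\phi,i)\bigr],
\]
where $\mathrm{Check}_Q$ is the polytime predicate computed by the verifying circuit $D$ described after the definition. That is, $\mathrm{Check}_Q$ says that the line indexed $i$ is correctly derived from the earlier lines it cites, and that the last line is $\phi$. This is a $\forall\Pi^b_1$, indeed $\PV$-universal, formula; since $S^1_2$ is $\forall\Sigma^b_1$-conservative over $\PV$, we may work with it as a universal $\PV$ consequence.

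\emph{Translation.} Apply the Cook translation: a universal statement provable in $S^1_2$ or $\PV$ has polynomial-size $\EF$-proofs of its propositional translations, constructible in polynomial time from the length parameters. Instantiate the lengths at $|\pi|$ and $|\phi|$. This yields a tautology family expressing that the premise implies the circuit-based check, where circuits are encoded via extension variables. Then substitute the concrete bits of $\pi$ and $\phi$; substitution of constants into an $\EF$-proof stays an $\EF$-proof. Discharge the premise $\mathrm{Prf}_P(\pi,\phi)$, which is a true variable-free statement and so has a polynomial-size proof by evaluation. What remains is an $\EF$-proof of a formula in the free variables for $i$ asserting that the check circuit accepts.

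\emph{Main obstacle.} The hard step is matching this translated formula syntactically with $\gamma_{C_\pi,\phi}$ as officially defined, i.e.\ with the specific circuit $D$ built from $C_\pi$. The translation uses its own canonical circuit for $f(\pi,\cdot)$ and for the checking algorithm. To handle this I will define $C_\pi$ and $D$ to \emph{be} exactly the canonical circuits produced by the translation, so that the two formulas coincide up to renaming of extension variables. Alternatively, I can add a short $\EF$-proof of equivalence between the two formulas, obtained gate by gate by induction on the circuit structure. Since both $C_\pi$ and $\Pi$ are then produced in polynomial time, we get $P \le_p \nimp_\EF(Q)$, and by Lemma~\ref{lem:EF_to_Res} also $P \le_p \nimpres(Q)$.
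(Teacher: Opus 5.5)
Your proposal follows the paper's proof: translate the $S^1_2$-provable $\forall\Pi^b_1$ statement into polynomial-time constructible $\EF$ proofs, substitute the bits of the given $P$-proof and of $\phi$ to discharge the premise, and read off $\gamma_{C,\phi}$ with $C$ the translation's circuit for $f(\pi,\cdot)$. Your annotation paragraph is muddled, since brute search cannot recover premise indices in an exponential-size proof, but no recovery is needed: the paper, in calling the hypothesis $\forall\Pi^b_1$, takes the strings $f(\pi,i)$ to already be the tuples $(\phi_i;i_1,\dots,i_k;j)$, and under the literal reading you should instead apply Buss's witnessing theorem to the $S^1_2$-provable $\forall\Sigma^b_1$ statement ``every line has valid earlier premises'' to get a polynomial-time function computing them.
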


\begin{proof}
The assumptions tell us that $S^1_2$ proves a certain $\forall \Pi^b_1$ sentence.
By standard results about $S^1_2$, it follows that propositional translations
of this sentence have polynomial-time constructible $\EF$ proofs. That is, there is a sequence
$\{ \pi_n \}_n$ of EF proofs of the tautology
(\ref{eq:EF-proof}) below, where $\pi_n$ is constructible in polynomial time
from the unary string $1^n$.
The tautology has Boolean variables $x_1, \dots, x_n$ coding  an $n$-bit $P$-proof; 
$y_1, \dots, y_n$ coding the proved formula $\phi$; and $i_1, \dots, i_{p(n)}$ coding a location $i$
in the implicit $Q$-proof computed by~$f$:
\begin{equation} \label{eq:EF-proof}
\ang{\textrm{$\vec x$ codes a $P$-proof of $\phi$}}
\rightarrow
\ang{\textrm{$f(\vec x, \cdot)$ locally at~$i$ is a $Q$-proof of $\phi$}}.
\end{equation}

Now we can show the reduction. Suppose we are given a concrete pair~$\phi, \rho$ of a formula $\phi$
and a $P$-proof $\rho$ of $\phi$. Then $\rho$ has some binary length~$n$. We generate the EF proof
$\pi_n$ from the sequence above, whose last line has the form~(\ref{eq:EF-proof}). We substitute
into~(\ref{eq:EF-proof}) the actual bits of $\rho$ for the variables $\vec x$, and the actual bits of $\phi$ 
for the variables $\vec y$ describing the formula~$\phi$ in~(\ref{eq:EF-proof}).
This will satisfy the left-hand expression in~(\ref{eq:EF-proof}), so
we obtain an EF proof of a formula
\begin{equation} \label{eq:EF-proof-2}
\ang{\textrm{$f(\rho, \cdot)$ locally at~$i$ is a $Q$-proof of $\phi$}}
\end{equation}
where $i$ is coded as variables $i_1, \dots, i_{p(n)}$ and $\rho$ and $\phi$ are now fixed.
We may assume that, in the propositional translation, $f(\rho, \cdot)$ is now represented
as a fixed circuit~$C$ taking as input the variables $i_1, \dots, i_{p(n)}$. 
Thus from~(\ref{eq:EF-proof-2}) we can straightforwardly derive
the formula $\gamma_{C, \phi}$. Thus we have a circuit~$C$
and an EF proof of~$\gamma_{C, \phi}$, which together comprise
a $\nimp_{\EF}$ proof of~$\phi$.
\end{proof}

\begin{corollary} \label{cor:s12_nimpres}
Under the assumptions of Lemma~\ref{lem:s12-implicit},
$P {\le_p} \nimpres(Q)$. \qed
\end{corollary}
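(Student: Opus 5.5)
This is immediate from Lemma~\ref{lem:s12-implicit} combined with Lemma~\ref{lem:EF_to_Res}. Under the assumptions of Lemma~\ref{lem:s12-implicit}, that lemma gives $P \le_p \nimp_\EF(Q)$. That is, there is a polynomial-time procedure taking a $P$-proof $\rho$ of $\phi$ to a pair $(\Pi, C)$, where $\Pi$ is an EF proof of $\gamma_{C,\phi}$.

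Lemma~\ref{lem:EF_to_Res}, applied with $Q$ in place of its $P$, gives $\nimpres(Q) \equiv_p \nimp_\EF(Q)$. In particular it gives the direction $\nimp_\EF(Q) \le_p \nimpres(Q)$, witnessed by the polynomial-time transformation in its proof. That transformation does three things:
\begin{itemize}
\item it converts $\Pi$ into an ER refutation of $\neg\gamma_{C,\phi}$;
\item it absorbs the extension variables into a modified circuit $C'$, which runs the enlarged checking circuit $D'$ and discards the output before computing $C$;
\item it produces a plain resolution refutation $\Pi'$ of $\neg\gamma_{C',\phi}$.
\end{itemize}
Since $C'$ computes the same function as $C$, it still encodes the same $Q$-proof. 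So the encoded lines are unchanged and remain of polynomial size, and narrowness is preserved.

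Composing the two polynomial-time reductions, and using that $\le_p$ is transitive, yields $P \le_p \nimpres(Q)$.

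The only point needing care is that the first step really produces an object in the format Lemma~\ref{lem:EF_to_Res} expects. Concretely, we need $\neg\gamma_{C,\phi}$ to be the CNF encoding of the checking circuit $D$, so that EF proofs of $\gamma_{C,\phi}$ translate to ER refutations of that CNF. Both lemmas use the same definition of $\gamma_{C,\phi}$ given earlier, so this matches up, and there is no real obstacle.
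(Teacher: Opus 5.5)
Your proposal is correct and matches the paper's argument, which marks the corollary with \qed{} as immediate: compose $P \le_p \nimp_\EF(Q)$ from Lemma~\ref{lem:s12-implicit} with $\nimp_\EF(Q) \le_p \nimpres(Q)$ from Lemma~\ref{lem:EF_to_Res}. Your added check is also sound: $C'$ computes the same function as $C$, so the encoded $Q$-proof and its narrowness are unchanged.
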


%%%%%%%%%%%%%%%%%%%%%%%%%%%%%%%%%%%%%%%%%%%%%
%%%%%%%%%%%%%%%%%%%%%%%%%%%%%%%%%%%%%%%%%%%%%
\section{Unfolding DAGs into trees}\label{sec:unfolding}
%%%%%%%%%%%%%%%%%%%%%%%%%%%%%%%%%%%%%%%%%%%%%
%%%%%%%%%%%%%%%%%%%%%%%%%%%%%%%%%%%%%%%%%%%%%

Unfolding DAGs, directed acyclic graphs, is one of the basic constructions used in complexity theory
and will play an important role in our results. 
Trees in this paper are finite with indegree at most two and with edges directed towards the root, 
which we think of as at the bottom of the tree.
We will sometimes call edges \emph{arrows}.

Let $G$ be a rooted DAG, that is, a directed acyclic graph with only one sink,
which furthermore has indegree at most two and labeled vertices. 
Let~$V$ be the set of vertices of $G$ and $\lambda$ the labeling function. 
We will think of the predecessors of a node as being given in some particular
order, so we also associate with $G$ two functions $L$ and $R$
for the left and right predecessors.
On vertices of indegree one $L$ and $R$ coincide, and on leaves we take them both to be loops.

Then such  a $G$ can be unfolded into a labeled rooted tree $T$ of size at most~$2^{|V|-1}$,
equipped with functions $\lambda$, $L$ and $R$,
in such a way that the ``local structure'' is preserved, 
meaning that there is a homomorphism from~$T$ to~$G$.
In the presence of $L$ and $R$ this homomorphism is unique.

\subsection{Succinct representation of trees}\label{sss-trees}

We will need to define the structure of $T$ succinctly, using a circuit. 
To this end we need a concrete representation of~$T$. 
The basic idea is that we can address a vertex $v$ of $T$ by a path in $G$ going up from the root of $G$ 
to the homomorphic image of~$v$. If we use $0$ for the left predecessor and $1$ for the right predecessor, then each node gets an address consisting of a binary string of length at most $n-1$. However, we need to represent the addresses by strings of equal length  because the strings will be inputs of Boolean circuits, so we also need a blank symbol to pad out the string.
Moreover we would like to use an encoding on which the natural, lexicographic ordering on addresses has the following two properties, for nodes~$u,v,w$:
\bi
\item if $u$ is above $w$ in the tree, then $u < w$
\item if $u$ is in the right subtree above $w$, and $v$ is in the left
subtree above~$w$, then $u<v$.
\ei
Here the first condition is clearly desirable, and we want the second because of the particular
re-wiring we do of a proof during cut-elimination.

We can achieve this by using strings of length~$2n$ in which right and left are represented
by $00$ and $01$ respectively, and the blank symbol is represented by~$11$. Thus
the node reached from the root by going first left and then right
has the address $01 \, 00 \, 11 \, \dots \, 11$. 

Then the relation that defines the arrows of $T$, the predecessor functions, and the successor functions can be computed by polynomial-size circuits taking these addresses as inputs.
To compute the labels of $T$, we first observe that the homomorphism from~$T$ to~$G$ can be computed by a small circuit. Indeed, given $\vec v$ representing a vertex of $T$, the circuit can compute the path in $G$ that leads to the homomorphic image of $\vec v$ in $G$. Having determined the vertex $u$ in $G$, the circuit can use a lookup table to determine the label of $u$.

\subsection{Verifying properties}\label{s-verifying}

Next we need to show that some properties of~$G$ are preserved in $T$. 
Suppose~$G$ represents the %predecessor-successor relation 
edge-structure
of a proof in some formula-based proof system.
That is, the labels of $G$ are formulas and~$G$ satisfies the property that each formula is derived from its predecessors by a rule of the system (on leaves we have axioms, which we view as rules with no premises).
We need to be able to succinctly prove that unfolding this DAG into a tree produces a 
treelike proof in the same system.

Returning to our abstract labeled DAG $G$ and its unfolding~$T$,
let $h$ be the homomorphism from $T$ to $G$ defined above. 
We need to establish, by short uniform EF proofs or equivalently in~$S^1_2$, 
that $h$ satisfies three properties.
\ben
\item
For every $\vec v\in\{0,1\}^{2n}$, either $h(\vec v)$ is one of the vertices of~$G$ or,
in case~$v$ does not represent a vertex in $T$, then $h(\vec v)$ is a special symbol~$\diamond$ 
\item 
The function~$h$ is a homomorphism with respect to the predecessor functions, that is,
$L(h(\vec v))=h(L(\vec v))$
and 
$R(h(\vec v))=h(R(\vec v))$,
where $L(\vec v)$ and $R(\vec v)$ represent left and right predecessors in~$T$.
\item 
The function~$h$ preserves labels, that is, $\lambda(\vec v)=\lambda(h(\vec v))$.
\een
It is clear that if we define $h$ as in the previous section, then these facts have short proofs. 
For instance, we defined the labels on $T$ using the equation $\lambda(\vec v)=\lambda(h(\vec v))$, so we get the third property immediately.

Once we have these proofs, proving other things becomes easy. Suppose, for example, that there is a ternary relation $A$ defined on labels such that 
$
A\bigl(\, \lambda(u), \, \lambda(L(u)), \, \lambda(R(u)) \, \bigr)
$
holds in $G$ for every vertex~$u$, perhaps defining that formulas are derived according to a certain rule.
The proof that $T$ has the same property is as follows:
%\begin{quote}\small {\it Proof.}
 Let $\vec v\in\{0,1\}^{2n}$. If $h(\vec v)\neq\diamond$ then 
 \mbox{$h(\vec v)=u$} for some vertex in~$G$. So we have
$A\bigl(\, \lambda(h(\vec v)), \, \lambda(L(h(\vec v))), \, \lambda(R(h(\vec v))) \, \bigr)$.
Using~2. and 3. above, we get
\[
\lambda(h(\vec v))=\lambda(\vec v),\quad \lambda(L(h(\vec v)))=\lambda(L(\vec v)) \quad
\mbox{and} \quad \lambda(R(h(\vec v)))=\lambda(R(\vec v))
\]
and conclude
$A\bigl(\, \lambda(\vec v), \, \lambda(L(\vec v)), \, \lambda(R(\vec v)) \, \bigr)$.

\subsection{An example proposition}

We conclude this section with a simple proposition 
(which we will not use later)
that we can  derive using the considerations above.

\bpr
Let $P$ be a formula based proof system and $P^*$ its treelike version. Then $\nimpres(P^*)$ polynomially simulates~$P$.
\epr

\begin{proof}
We can use the unfolding technique described above
to construct a polynomial-time function $f(\pi, i)$ which, given a $P$-proof~$\pi$ of a formula $\phi$,
 will tell you any given line of the unfolding
of $\pi$ into a treelike proof. Furthermore we can prove in $S^1_2$ that the object described by $f$
is still a $P$-proof. The proposition follows by Corollary~\ref{cor:s12_nimpres}.
\end{proof}

%%%%%%%%%%%%%%%%%%%%%%%%%%%%%%%%%%%%%%%%%%%%%%%%%%
%%%%%%%%%%%%%%%%%%%%%%%%%%%%%%%%%%%%%%%%%%%%%%%%%%
\section{Uniform cut-elimination} \label{sec:main}
%%%%%%%%%%%%%%%%%%%%%%%%%%%%%%%%%%%%%%%%%%%%%%%%%%
%%%%%%%%%%%%%%%%%%%%%%%%%%%%%%%%%%%%%%%%%%%%%%%%%%

This section contains the main construction of the paper,
proving Theorem~\ref{the:cutelim_theorem} by
using cut-elimination to turn an explicit (small) $G_{i+1}$ proof
into an implicit (large)~$G_i$ proof. 
Precisely, in Sections~\ref{sec:aux_rules} and~\ref{sec:absorb} we 
introduce some auxiliary rules and proof systems; in Section~\ref{sec:imp_SG_0}
we show  that $G_1 \le_p \nimpres(S_L G_0)$, where~$S_L G_0$ is $G_0$
plus a special case of the substitution rule; and in Section~\ref{sec:cut_elim_general} we
generalize this to $G_{i+1} \le_p \nimpres(S_L G_i)$ and finally show
that for $i \ge 1$ we can eliminate substution to get $G_{i+1} \le_p \nimpres(G_i)$.

\subsection{Auxiliary rules} \label{sec:aux_rules}

We will make use of two auxiliary rules.
\bi
\item 
The \emph{substitution-left} rule is the following minor variant of the substitution rule:
\[
\frac{\Gamma, \phi(a) \seq \Delta}{\Gamma, \phi(a / \beta) \seq \Delta}
\]
where $\beta$ is a quantifier-free formula and $a$ does not occur in $\Gamma$ or $\Delta$.
\item 
The $S_LWC$ rule (short for \emph{substitution-left, weakening, cut}) is
\bel{e-s-c}
\frac{\Gamma' \seq \phi(a_1 / \beta_1\dts a_k / \beta_k),\Delta' \quad\quad
\phi(a_1\dts a_k),\Gamma'' \seq\Delta''}{\Gamma\seq\Delta}
\ee
where $\beta_1\dts \beta_k$ are quantifier-free formulas, 
$a_1\dts a_k$ are free variables which do not occur in $\Gamma$ or $\Delta$, 
and $\Gamma', \Gamma'' \sub\Gamma$ and $\Delta', \Delta'' \sub \Delta$. 
\ei
An instance of the  $S_LWC$ rule can be replaced with~$2k$ instances of substitution-left,
replacing $\phi(a_1\dts a_k)$ in the right premise by $\phi(a_1 / \beta_1\dts a_k / \beta_k)$, 
followed by two weakenings, followed by a cut
(this requires $2k$ substitution-lefts, rather than just~$k$, because an $a$~variable
may occur inside a $\beta$ formula; so we first need to substitute-in all the $\beta$ formulas
with any such variables replaced with dummies, then substitute-in again to replace
the dummy variables with $a$-variables).
 We include it as a single rule only so that 
the surgery on implicit proofs during cut elimination works more tidily.

We denote by $S_{L}G_i$ the extension of $G_i$ in which we also allow the
substitution-left rule.

\subsection{Absorption systems} \label{sec:absorb}

As an intermediate step in our main theorem,
 it will be helpful to put $G_i$ proofs into a highly structured form that uses specialized rules.
We start by defining this.

By an \emph{absorption} system we broadly mean a sequent calculus with two properties. 
Firstly, all formulas in the conclusion of a rule must also be present 
in every premise. One consequence of this is that every formula in the proof must already appear in some axiom.
Secondly, they require explicit parent-child relations between formulas in the premise and formulas
in the conclusion, such that
\bi
\item
Each formula in the conclusion has, in each premise, an occurrence of the same formula
as a parent, and may also have one or more auxiliary formulas as parents; and
\item
No formula in a premise has more than one child.
\ei

Given a rule of $G$ (except weakening), or the substitution-left rule, we form its \emph{absorption version} as follows:
\ben
\item 
Mark each side formula in a premise as a parent of the corresponding side formula
in the conclusion
\item
Mark the auxiliary formulas as parents of the principal formula (if there is one)
\item
Finally add a new copy of the principal formula (if there is one) to every premise, and 
mark these as parents of the principal formula.
\een
We will call the resulting rules \emph{absorption $\exists$-left} etc.
For example the absorption $\vee$-left rule is
\[
\frac{A, A\vee B, \Gamma \seq \Delta \qquad B, A \vee B, \Gamma \seq \Delta}
{A \vee B, \Gamma \seq \Delta}
\]
where $A \vee B$ in the conclusion has four parents, namely  $A$ and $A \vee B$
on the left and $B$ and $A \vee B$ on the right, and each occurrence of a formula
in $\Gamma$ or $\Delta$ in the conclusion
has two parents, namely the same occurrence of the formula in $\Gamma$ or $\Delta$
in the left and right premises.

The full weakening rule does not really make sense in an absorption system so we will not include it;
however we do allow absorption repetition. Absorption contraction, as we have defined it, now has
the strange form that we combine \emph{three} copies of a formula into one (because of the extra copy
of the principal formula), but this is not a problem for us.

We denote by $AG_i$ the version of $G_i$ which only uses absorption rules. 
It is not hard to see that $AG_i$ is polynomially equivalent to $G_i$, but it is unlikely
that this is true in general for implicit (or even narrow implicit) $AG_i$ and $G_i$.

\subsection{Main construction} \label{sec:imp_SG_0}

\bt \label{the:base_case}
$\nimp_{Res}(S_L G_0)$ polynomially simulates $G_1$.
\et

\begin{proof}
We will use Lemma~\ref{lem:s12-implicit}. To this end we need to show that there is a polynomial time function $f$ such that, 
given a string encoding a $G_1$ proof~$\pi$ with final sequent $\seq \phi$, 
the function $f$ computes the lines of a (potentially exponentially
larger) $S_L G_0$ proof with the same final sequent.
We will do this by transforming $\pi$ in a sequence of steps, working in~$S^1_2$. The object 
constructed at each step will be a proof (or almost-proof) definable
by a polynomial-time algorithm. We need furthermore that this can be formalized in~$S^1_2$. We will not describe a formalization since the steps of the algorithms are simple operations with strings of symbols, which is what $S^1_2$ can easily formalize.

Steps 1 to 4 will turn $\pi$ into an implicit treelike $AG_1$ proof.
Steps 5 and~6 will carry out cut-elimination, removing 
cuts on quantified formulas and resulting in an implicit
daglike proof. These  steps introduce some rules
that break monotony of sequents, and the constructed proof  no longer
uses only absorption rules. The final step 7 does some cleaning up
to produce an implicit $S_L G_0$ proof.

We say that a proof has \emph{ordered cuts} if, for every instance of the cut
rule on a formula $\theta$ in $\Sigma^q_1 \setminus \Sigma^q_0$,
the cut formula appears in the succedent in the left-hand premise
and in the antecedent in the right-hand premise.

\paragraph{Step 1} We replace $\pi$ with a $G_1$ proof $\pi_1$ in 
which all axioms have the form $A {\seq} A$ with $A$ quantifier-free.
For example if $\exists x \theta(x) {\seq} \exists x \theta(x)$
was an axiom in $\pi$,
with $\theta(x)$ quantifier-free, we instead derive this sequent from 
the axiom $\theta(a) \seq \theta(a)$
by using the $\exists$-right rule then the $\exists$-left rule.

This requires at most a quadratic increase in proof size,
which may occur if the blocks of quantifiers are very long.
We can still code $\pi_1$ directly as a string rather
than implicitly. % as a circuit. 
Let $h$ be the height of~$\pi_1$.

\paragraph{Step 2}
As described in Section~\ref{sec:unfolding}, we unfold $\pi_1$ into a treelike $G_1$ proof~$\pi_2$
with ordered cuts,
coded implicitly. % as a circuit.
Note that~$\pi_2$ has the same height~$h$ as~$\pi_1$.

\paragraph{Step 3} We put the proof into  free-variable normal form.
Let us say that a variable is \emph{eliminated} at a sequent~$\sigma$
if it occurs in $\sigma$ but not in the conclusion of the rule
in which $\sigma$ is a premise (recalling that we are dealing with a treelike proof), or
if $\sigma$ is the final sequent. Free-variable normal form  means here
that each free variable is eliminated at exactly one sequent and only
occurs above that sequent.

To put a treelike proof into free-variable normal form is a matter of renaming variables.
However we need to do this in such a way that the resulting proof is still
polynomial-time. % coded by a small circuit. 
For this it is sufficient to give a polynomial-time procedure which, 
% given the circuit encoding $\pi_2$ and 
given the address~$s$ of a sequent in~$\pi_2$ and making calls to the machine encoding~$\pi_2$, returns the 
sequent at~$s$ in the proof after the renaming. 

Our procedure is as follows.
Let $\sigma$ be the sequent at~$s$ in $\pi_2$.
List all free variables $a_{i_1}, \dots, a_{i_k}$ occurring in $\sigma$. For each such variable $a_{i_j}$,
search down
the branch in $\pi_2$ until we reach the first sequent $\tau$ where this variable no
longer occurs (that is, the sequent after it is eliminated).  We rename $a_{i_j}$ as $a_{i_j,t}$
where $t$ is the address of $\tau$ in the tree,
or $t$ is a symbol~$*$ if the variable survives to the final sequent. 
This procedure is polynomial-time since there are at most~$h$ sequents to consider
between $s$ and the final sequent and each sequent has polynomial size.

We call the object after the renaming~$\pi_3$ and claim
that it is still a $G_1$ proof.
This is because firstly, any free variable which occurred in two sequents in a rule will have
its name changed identically in both sequents.  Secondly,
the renaming does not break any eigenvariable condition,
since two variables which were distinct in~$\pi_2$ are still distinct in~$\pi_3$.

\paragraph{Step 4} We transform $\pi_3$ into an $AG_1$ proof $\pi_4$.
Consider any sequent $\sigma$ in $\pi_3$ which is not the last sequent.
It is the premise of some rule, whose conclusion is a sequent~$\tau$. 
Mark each side formula in $\sigma$ as a parent
of the corresponding side formula in~$\tau$.
Mark all auxiliary formulas in~$\sigma$ as parents of
the principal formula in~$\tau$, if there is one.

Then work down the path which leads from~$\sigma$ all the way to the last sequent. Each 
time a new formula is introduced on this path below~$\sigma$, 
either as a principal formula or in an instance of weakening,
add a copy of it at the end of the succedent or antecedent of~$\sigma$,
to match the side on which it is introduced. This formula either occurs in~$\tau$ 
already because it is introduced there, or will occur in $\tau$ after processing because
it is introduced below~$\tau$; in either case mark the new occurrence in $\sigma$
as a parent of the occurrence in~$\tau$. 

Again the path from $\sigma$ to the root has length at most~$h$ 
so we have added at most polynomially many formulas to~$\sigma$
(it may be more than~$h$, if several formulas are introduced in a weakening step)
and the new proof is still narrow in our technical sense. Furthermore
we can construct the new~$\sigma$ in polynomial time, so 
we have a polynomial-time machine encoding~$\pi_4$. % small circuit for~$\pi_4$.

To see that $\pi_4$ is a valid $AG_1$ proof we observe that the leaves
are still valid axioms (now with many extra side formulas)
and that the renaming that we did in step 3 guarantees that
the eigenvariable condition still holds at each $\exists$-left rule.
Furthermore all formulas remain in weakly prenex form.

\bigskip

Our goal now is  to remove every cut on a $\Sigma^q_1 \setminus \Sigma^q_0$ formula,
that is, on a quantified formula.
Since $\pi_4$ still has ordered cuts, these have the form
\bel{e-cut2}
\frac{\Gamma\seq\exists x_1\dots\exists x_k\phi(x_1\dts x_k),\Delta\quad\quad
\exists x_1\dots\exists x_k\phi(x_1\dts x_k),\Gamma\seq\Delta}{\Gamma\seq\Delta}
\ee
where $\phi$ is quantifier-free,
with the cut formula in the succedent in the left premise and the antecedent in the right
premise, as shown.

We record an easy claim about~$\pi_4$ which we will need later in step 6.

\begin{claim} \label{cla:descendant_substitution}
Let $\phi(a / \beta)$ be an occurrence in $\pi_4$ of
a quantifier-free auxiliary formula in an instance of absorption $\exists$-right,
so that the rule replaces  $\phi(a / \beta)$ with $\exists x \phi(x)$. 
Then every descendant 
of $\phi(a/\beta)$ is a quantified formula of the form
$\exists x_1 \dots \exists x_k \phi'(x_1, \dots, x_k)$ with the property
that $\phi(a / \beta)$ is a substitution instance of~$\phi'(x_1, \dots, x_k)$,
treating $x_1, \dots, x_k$ here as free variables which in the substitution we replace
with quantifier-free formulas. 
\end{claim}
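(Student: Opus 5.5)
The plan is to follow the descendant chain of $\phi(a/\beta)$ downward through $\pi_4$ and argue by induction along this chain. In an absorption system each formula in a premise has at most one child, so the descendants form a single path of occurrences in successive sequents along the branch from the occurrence down to the root or to a cut. The inductive invariant is the property stated in the claim: the current descendant has the form $\exists x_1\dots\exists x_k\phi'(x_1,\dots,x_k)$, with $\phi'$ quantifier-free and $\phi(a/\beta)=\phi'(\beta_1,\dots,\beta_k)$ for some quantifier-free $\beta_j$.

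The base case is the immediate child, which is $\exists x\phi(x)$ itself. Here $k=1$ and $\phi'=\phi(x)$, and the witnessing substitution is $x\mapsto\beta$.

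For the inductive step I will go through the absorption rules and ask how a formula $\exists\vec x\,\phi'$ of this shape can pass from a premise to its conclusion.
\begin{itemize}
\item If it is a side formula, its child is an identical copy, so the invariant is preserved.
\item In absorption contraction or repetition, and as the extra absorbed copy of a principal formula, the child is again the same formula.
\item If it is an auxiliary formula, the rule must introduce a connective or quantifier above it. All formulas are weakly prenex and in $\Sigma^q_1$, and $\exists\vec x\,\phi'$ with $k\ge1$ is not quantifier-free. So no propositional connective can be applied to it, and the only possible rule is absorption $\exists$-right adding another existential quantifier. That rule replaces some occurrences of a quantifier-free $\gamma$ in $\phi'$ by a fresh $x_{k+1}$, giving $\exists x_{k+1}\exists\vec x\,\phi''$ with $\phi'=\phi''(x_{k+1}/\gamma)$. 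Composing substitutions, $\phi(a/\beta)=\phi''(\vec\beta,\gamma[\vec x/\vec\beta])$, which is again a substitution instance.
\end{itemize}
For that last case I would note that the $\exists$-right rule as defined acts on a formula with free variable $a$ substituted by $\beta$. Strictly, one must check that $\gamma$ may contain the bound variables $x_j$ and that substituting $\vec\beta$ into $\gamma$ is then required. Because of the well-formedness restriction on bound variables, this substitution is still well-defined and quantifier-free.

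I expect the main obstacle to be making the case analysis exhaustive with respect to the paper's precise syntax. This covers cuts, where the chain simply ends, and the absorption $\exists$-left and $\forall$ rules, which cannot apply with this formula as auxiliary in a succedent; weakening is absent from $AG_1$. It also requires handling carefully the interaction between bound variables occurring inside the $\beta$'s in the $\exists$-right rule. Each case is routine once the invariant is fixed.
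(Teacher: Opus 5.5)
Your proposal is correct and matches the paper's proof: induction along the unique chain of descendants (unique because $\pi_4$ is treelike and uses absorption rules), with the immediate child $\exists x\phi(x)$ as the base case, and absorption $\exists$-right as the only rule that can change a quantified weakly prenex $\Sigma^q_1$ descendant, in which case the substitution instances compose. Your explicit case analysis and the composed witness $\gamma[\vec x/\vec\beta]$ simply spell out what the paper summarizes as ``just as in the base case''.
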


\begin{proof}
This is clearly true for the immediate child $\exists x \phi(x)$ of $\phi( a / \beta)$,
since $\phi(a / \beta)$ is obtained from $\phi(x)$ by substituting $\beta$ for $x$.
Then consider the sequence of descendants of $\phi(a / \beta)$. Since we 
are in a treelike, absorption proof, this is a well-defined sequence of formulas
of length at most~$h$. Since the first formula in the sequence is quantified
and all formulas are weakly prenex, by our choice of rules all formulas in the sequence
are quantified and the only way a child in the sequence can be different from
its parent is through an instance of the  absorption 
$\exists$-right rule
\[
\frac{v: \Gamma \seq \Delta, \  \exists x_1, \dots, x_k \phi''( b/\gamma, \vec{x}),  \
\exists y \exists x_1, \dots, x_k \phi''(y, \vec{x})}
{w: \Gamma \seq \Delta, \ \exists y \exists x_1, \dots, x_k \phi''(y, \vec{x})}
\]
where $\exists x_1, \dots, x_k \phi''( b/\gamma, \vec{x})$ is the parent
and we inductively assume that $\phi(a / \beta)$ is a  substitution instance 
of~$\phi''( b/\gamma, \vec{x})$. Then, just as in the base case,
$\phi(a / \beta)$ is also a  substitution instance 
of~$\phi''( y, \vec{x})$
\end{proof}

\paragraph{Step 5} Let $\theta = \exists x_1 \dots \exists x_k \phi(x_1, \dots, x_k)$ 
be any occurrence of a
quantified formula appearing in an  \emph{antecedent} in $\pi_4$
-- in particular  the right-hand premise of a cut rule on a quantified formula
contains such a formula, as in~(\ref{e-cut2}). Search down
the branch and let $s$ be the address of the last sequent in which a descendant
of~$\theta$ appears,  recalling that the final sequent is quantifier-free.
Replace $\theta$ with $\phi(b_{x_1, s}, \dots, b_{x_k, s})$ where
$b_{x_1, s}, \dots, b_{x_k, s}$ are new \emph{free} variables with names as shown.

Since $\pi_4$ is in weakly prenex form, and since axioms allow only
quantifier-free principal formulas, the only rules that could have
been invalidated by this change are $\exists$-left and cut.
A typical form of the absorption $\exists$-left rule in $\pi_4$ is
\[
\frac{\Gamma, \exists x_2\phi(a, x_2), \exists x_1 \exists x_2 \phi(x_1,x_2) \seq \Delta}
{\Gamma, \exists x_1 \exists x_2 \phi(x_1,x_2) \seq \Delta}
\]
where the two displayed quantified formulas in the premise are 
the parents of the displayed quantified formula in the conclusion 
(and where for simplicity we assume there is a single pre-existing bound
variable $x_2$; in general there could be several, or none). The eigenvariable~$a$
does not appear in the conclusion.

Because of the parent-child relationships, for the three displayed
quantified formulas, when we search down to find the last sequent in which they
appear, we will find the same sequent~$s$ in all three cases.
Therefore after our transformation this rule turns into
\[
\frac{\Gamma, \phi(a, b_{x_2,s}), \phi(b_{x_1,s},b_{x_2,s}) \seq \Delta}
{\Gamma, \phi(b_{x_1,s},b_{x_2,s}) \seq \Delta}.
\]
This is now a valid instance of the absorption substitution-left rule, 
replacing~$a$ with $b_{x_1, s}$, and we relabel it as such.

We call the result of this step $\pi_5$.
Note that it no longer contains any quantified formula in any antecedent.
It is not a valid proof 
in any of our systems, since in cuts of type (\ref{e-cut2})
we have changed the cut formula on the right so that it no longer matches the one on the left.
We will handle this in the next step. But we have shown that, except
for at these cut rules, it is a valid proof in $AG_1$ plus absorption substitution-left.

\paragraph{Step 6}
To construct $\pi_6$ we first delete every quantified formula $\theta$ in every \emph{succedent} 
in~$\pi_5$, so that the proof no longer contains any quantified formulas.
We claim that with some surgery to the underlying graph of
the proof, and some relabelling of rules, this can be made into
a valid proof in the system $AG_0$ plus
the absorption substition-left rule and the $S_LWC$ rule.

Such formulas $\theta$ are introduced in $\pi_5$ by a $\exists$-right rule or as a 
side formula in an  axiom,
and are used in either a cut or another $\exists$-right rule.
So we need to show how such rules can be made valid after deleting 
all such formulas~$\theta$.

In an axiom, removing a side formula leaves a valid axiom, so there is nothing to show.
The relevant cuts were of type (\ref{e-cut2}) in $\pi_4$. In $\pi_5$ we transformed
the cut formula in the left premise; however now we will no longer use that premise.
This is because we have now deleted the cut formula
in the right premise. Therefore the right premise is now identical to the conclusion,
and so we can relabel the rule as an instance of repetition, deriving
the conclusion from the original right premise.

This leaves the absorption $\exists$-right rule, which 
has the form
\[
\frac{v: \Gamma \seq \Delta, \phi(a/\beta), \exists x_k \phi(x_k)}
{w: \Gamma \seq \Delta, \exists x_k  \phi(x_k)}
\]
where we have marked the 
address $v$ and $w$ of these sequents in the tree.
If $\phi(a / \beta)$ is quantified, then in~$\pi_6$ we have deleted it
and both occurrences of $\exists x_k  \phi(x_k)$ from the proof, so we can relabel
this as a repetition rule. The remaining case is when $\phi(a / \beta)$ is quantifier-free.
To fix $\pi_6$ we have to show how to derive 
$\Gamma \seq \Delta$ from $\Gamma \seq \Delta, \phi(a/\beta)$.
We will use the $S_LWC$ rule.

We analyze the proof $\pi_4$, before we did steps~5 or~6.
Let $s$ be the address of the sequent in $\pi_4$ containing the last descendant of the 
occurrence~$\phi(a/\beta)$. By Claim~\ref{cla:descendant_substitution}, 
this descendant is a quantified formula of the form
$\exists x_1 \dots \exists x_k \phi'(x_1, \dots, x_k)$ with the property
that $\phi(a / \beta)$ is a substitution instance of~$\phi'(x_1, \dots, x_k)$. 
Furthermore this descendant must necessarily be a 
cut formula in the left premise of a cut rule. Thus we had
the following picture inside $\pi_4$:
\begin{prooftree}
\AxiomC{$\vdots$}
\noLine
\UnaryInfC{$v:\Gamma \seq \Delta, \phi(a/\beta), \exists x_k \phi(x_k)$}
\UnaryInfC{$w:\Gamma \seq \Delta, \exists x_k \phi(x_k)$}
\noLine
\UnaryInfC{$\vdots$}
\noLine
\UnaryInfC{$s: \Gamma' \! \seq \Delta', \exists x_1 \dots \exists x_k \phi'(x_1, \dots, x_k)$}
\AxiomC{$\vdots$}
\noLine
\UnaryInfC{$t: \Gamma', \exists x_1 \dots \exists x_k \phi'(x_1, \dots, x_k) \seq \Delta'$}
\BinaryInfC{$u:\Gamma' \seq \Delta'$}
\end{prooftree}
In $\pi_5$ we replaced the cut formula in $t$
with $\phi'(b_{x_1, t}, \dots, b_{x_k,t})$ and in the current step we have deleted
all remaining quantified formulas.
So in the ``proof'' we are dealing with currently, the picture above has become
\begin{prooftree}
\AxiomC{$\vdots$}
\noLine
\UnaryInfC{$v:\Gamma \seq \Delta, \phi(a/\beta)$}
\UnaryInfC{$w:\Gamma \seq \Delta$}
\noLine
\UnaryInfC{$\vdots$}
\noLine
\UnaryInfC{$s: \Gamma' \! \seq \Delta'$}
\UnaryInfC{$u:\Gamma' \! \seq \Delta'$}
\AxiomC{$\vdots$}
\noLine
\UnaryInfC{$t: \Gamma', \phi'(b_{x_1, t}, \dots, b_{x_k,t}) \seq \Delta'$}
\noLine
\UnaryInfC{\phantom{$\Gamma$}}
\noLine
\BinaryInfC{}
\end{prooftree}
(for clarity we are ignoring that we may also have deleted quantified formulas in $\Delta$ and $\Delta'$).

As noted above, $\phi(a/\beta)$ is a substitution instance of $\phi'(b_{x_1, t}, \dots, b_{x_k,t})$
formed by replacing $b_{x_1, t}, \dots, b_{x_k,t}$ with quantifier-free formulas.
Also, by the presence of the suffix~$t$, no variable from $b_{x_1, t}, \dots, b_{x_k,t}$
can occur in the proof below address~$t$, so they cannot occur in $\Gamma'$ or $\Delta'$.
Lastly, by the absorption condition applied inductively to the sequents from $w$ to $s$, we
know $\Gamma' \subseteq \Gamma$ and $\Delta' \subseteq \Delta$. 
Therefore
\begin{prooftree}
\AxiomC{$v:\Gamma \seq \Delta, \phi(a/\beta)$}
\AxiomC{$t:\Gamma', \phi'(b_{x_1, t}, \dots, b_{x_k,t}) \seq \Delta'$}
\BinaryInfC{$w:\Gamma \seq \Delta$}
\end{prooftree}
is a valid instance of the $S_LWC$ rule.

Therefore we add an edge in our proof tree from $t$ to $w$, and mark in the proof
that $w$ is derived from $v$ and $t$ by the $S_LWC$ rule. We do this for all instances
of absorption $\exists$-right of this form.
As desired this is now a valid proof in the system $AG_0$ plus
the absorption substition-left rule and the $S_LWC$ rule.
In particular the edge from $t$ to $w$ is valid because we are using an ordering on nodes
in the tree in which everything in the right subtree above $u$ comes before everything 
in the left subtree above~$u$, so $t<w$; see the definition of the ordering
in Section~\ref{s-verifying}.

\paragraph{Step 7.}
We make $\pi_6$ into a $S_L G_0$ proof $\pi_7$. To do this we first choose a suitable~$k$,
greater than twice the number of variables in any individual formula, and uniformly
pad out each step in $\pi_6$ by following it with~$k$ steps applying the repetition rule. 
Using the space this creates after each instance of the $S_LWC$ rule, we replace that rule
with substitution-left steps, weakenings and a cut. We replace every absorption version of a rule
with the non-absorption version of that rule, followed by a contraction step to remove the extra
copy of the principal formula. Finally we give each variable appearing in the final sequent
back its original name.

\bigskip

This completes the simulation of $G_1$ by implicit $S_L G_0$.
\end{proof}

\subsection{Generalization to $G_i$ and eliminating substitution} \label{sec:cut_elim_general}

We complete the proof of Theorem~\ref{the:cutelim_theorem}, that is,
that
$G_{i+1} \le_p \nimpres(G_i)$ for  $i \ge 1$,
and 
$G_1 \le_p \nimpres(SG_0)$. 

We showed in Theorem~\ref{the:base_case}
in the last section that $G_1 \le_p \nimpres(S_L G_0)$ which clearly gives the second statement.
We claim that, with minimal changes, the proof of  Theorem~\ref{the:base_case}
also shows that $G_{i+1} \le_p \nimpres(S_L G_i)$.
Namely we use all the same steps, but rather than eliminating quantified formulas,
we eliminate those that are not $\Pi^q_{i}$ or $\Sigma^q_i$ 
--- the true reason for starting with the special case was that it is easier to talk about quantified and quantifier-free formulas than about $\Sigma^q_{i+1}\setminus(\Pi^q_{i}\cup\Sigma^q_i)$ formulas and $\Pi^q_{i}\cup\Sigma^q_i$ formulas.
However in rules where we substitute a variable with a formula, the substituted formula must still
be quantifier-free.

Finally we want to remove instances of the substitution-left rule, that is,
\[
\frac{\Gamma, \phi(a) \seq \Delta}{\Gamma, \phi(a / \beta) \seq \Delta}
\]
where $\beta$ is quantifier-free and $a$ does not occur in $\Gamma$ or $\Delta$.
We use a  simplified case of the construction in~\cite{krajicek-pudlak90} 
for full substitution. 
We describe below the case where~$\beta$ does not contain~$a$. If $\beta$
does contain~$a$, we choose any variable~$b$ which does not occur in $\Gamma$, 
$\Delta$ or $\beta$ and do the construction below twice, first to replace $a$ with $b$ and
then to replace $b$ with~$\beta$.

Using induction on the complexity of $\phi$ and $\beta$ we derive from the axioms the two sequents
\[
A: \ a \equiv \beta, \phi(a / \beta) \seq \phi(a)
\quad \textrm{and} \quad
B: \ \seq \exists x ( x \equiv \beta).
\]
Then we replace the instance of the rule with the following steps.
\begin{enumerate}
\item
$\Gamma, \phi(a) \seq \Delta$ \quad (the original premise)
\item
$\Gamma, a \equiv \beta, \phi(a / \beta) \seq \Delta$ \quad (by cutting with $A$)
\item
$\Gamma, \exists x (x \equiv \beta) , \phi(a / \beta) \seq \Delta$  \quad  (by $\exists$-left,
noting that~$a$ does not occur in $\phi(a / \beta)$, $\Gamma$ or $\Delta$)
\item
$\Gamma, \phi(a / \beta) \seq \Delta$  \quad (by cutting with $B$).
\end{enumerate}

This whole derivation has size polynomial in the size of the original instance of the rule
and is constructible in polynomial time. To replace all instances simultaneously in an implicit proof,
we first compute an upper bound $d$ on the number of steps required for this derivation, over all instances.
Then to give space for these derivations, we pad out the implicit proof by replacing every
sequent in the proof with a sequence of $d$ copies of the same sequent, derived from itself $d-1$
times using the repetition rule. Then in each case of the $S_L$ rule we replace this dummy
sequence with the derivation above. 

\section{Implicit resolution} \label{sec:impres}
%%%%%%%%%%%%%%%%%%%%%%%%%%%%%%%%%%%%%%%%%%%%%%%%%%%%%%%
%%%%%%%%%%%%%%%%%%%%%%%%%%%%%%%%%%%%%%%%%%%%%%%%%%%%%%%

We observed in Section~\ref{sec:intro} that $\nimpres(\Res) \equiv_p \impres(\Res)$,
since the formulas in a resolution refutation are clauses, which cannot be bigger than the number
of variables in the CNF being refuted. 
Let us remark here that while this is true for our formalization of resolution, 
the strength of $\impres(\Res)$ is potentially sensitive to the syntactical choices made.
For example, 
to show $\nimpres(\Res) \le G_1$ we showed in the proof of Lemma~\ref{lem:G_i+1_simulates}
that $T^1_2$ proves the soundness of $\nimpres(\Res)$. For this to go through for 
$\impres(\Res)$ rather than $\nimpres(\Res)$, we would like to be able
to evaluate a proof-line in polynomial time (in the length of the circuit describing the proof).
If a line is a clause, written out as a sequence of literals from the original formula with no repetitions, then this is
straightforward. But if lines are allowed to be padded out, 
for example by repeating literals, introducing new variables by weakening, 
or including repeated symbols such as $\bot$ (perhaps occurring as a result of a restriction by a partial assignment), 
then this may no longer be true.

We now prove Theorem~\ref{the:impres}, that
$G_1 \le_p \nimpres(\Res)$.

A PLS problem~\cite{pls}, with a parameter~$w$, is a triple $(D_w, N_w, t_w)$
where $D_w$ and $N_w$ are polynomial time machines defining a predicate and a function
(using the parameter~$w$) and $t_w$ is a  quasipolynomial in~$w$.
We may assume that $D_w \subseteq [t_w+1]$, that $t_w \in D_w$ 
and that $N_w: [t_w+1] \rightarrow [t_w+1]$.
A \emph{solution} to the problem is any $y \in D_w$ such that either $N_w(y) \le y$
or $N_w(y) \notin D_w$.

Thus a PLS problem with no solution, if such a thing existed, would give rise to a finite DAG with no sink,
with nodes given by~$D_w$ and edges given by~$N_w$. Using such a  DAG as the underlying graph,
inverting the direction of the edges, it is easy to construct 
a resolution proof of the empty clause using just the repetition rule. This is the main idea used below.

\bigskip

\noindent{\emph{Proof of Theorem~\ref{the:impres}.}}
The soundness of $G_1$, as a system for refuting CNFs,
is a $\forall\Pi^b_1$  sentence which is provable in~$T^1_2$~\cite{krajicek-pudlak90}. 
By results in \cite{buss-krajicek, beckmann_buss_09} we know that if a $\forall \Sigma^b_1$
sentence is provable in~$T^1_2$, then the search problem of witnessing this sentence
is reducible to an instance of PLS, provably over $S^1_2$. 
Since the soundness of $G_1$ is $\forall\Pi^b_1$ in particular it is $\forall \Sigma^b_1$, 
so it gives us a ``degenerate'' case of this witnessing theorem, where we do not 
exactly obtain an instance of PLS that we can reduce our sentence to (since our sentence
does not really represent a search problem), but rather an instance
of PLS which has no solution assuming our sentence is false.

Specifically, from the witnessing theorem and the provability of the soundness of $G_1$
in $T^1_2$, we obtain a PLS problem $Q_w = (D_w, N_w, t_w)$ with parameter~$w$
such that, provably in $S^1_2$, \emph{if} $w$ is a triple $(\phi, \pi, \alpha)$ witnessing that 
the soundness of $G_1$ is false, that is, such that
\begin{itemize}
\item
$\phi$ is a CNF in variables $x_1, \dots, x_n$
\item
$\pi$ is a $G_1$ refutation of $\phi$
\item
$\alpha$ is an assignment satisfying $\phi$,
\end{itemize}
\emph{then} $Q_w$ has no solution.
Of course in the real world no such $w$ exists and all PLS problems always
have solutions; the point is that these facts are (presumably) not provable in $S^1_2$, while
the above  implication \emph{is} provable.\footnote{
PLS, $T^1_2$ and reflection for $G_1$ play a similar role in~\cite{kol-thapen-24}.}

We claim that we can construct in polynomial time from $\phi$ and $\pi$ a
$\nimpres(\Res)$ refutation of $\phi$, that is, a circuit describing a large resolution
refutation of $\phi$ together with a resolution proof that the circuit works. 
By Corollary~\ref{cor:s12_nimpres}, it is enough to construct the large refutation
using polynomial time machinery, in a way that can be formalized in $S^1_2$.

So working in $S^1_2$, we fix $\phi$ and $\pi$ in the triple $(\phi, \pi, \alpha)$ 
and for clarity of notation consider the PLS problem $Q_w$ described above,
with $w = (\phi, \pi, \alpha)$,
as being parametrized simply by~$\alpha$, so we may write it as 
 $Q_\alpha = (D_\alpha, N_\alpha, t_\alpha)$.
Without loss of generality we may assume that there is $k \in \N$
such that, for each $\alpha$, the problem $Q_\alpha$ has domain $D_\alpha \subseteq [2^{n^k}]$,
that $t_\alpha = 2^{n^k}{-}1$ and is in~$D_\alpha$, and that a solution
is precisely any $y \in D_\alpha$ with either $N_\alpha(y) \ge y$ or $N_\alpha(y) \notin D_\alpha$.

We know (because it is provable in $S^1_2$) that if a total assignment $\alpha$ satisfies~$\phi$, then
the problem $Q_\alpha$ has no solution. 
Therefore the following is true: for any  $\alpha$ satisfying $\phi$,
we have that $2^{n^k}{-}1 \in D_\alpha$ and
for every $y \in D_\alpha$ both $N_\alpha(y) < y$ and $N_\alpha(y) \in D_\alpha$.

We now  describe our implicit refutation of $\phi$, still working inside~$S^1_2$.
For any total $\alpha$, let $C_\alpha$ be the clause
expressing the negation of $\alpha$. The refutation will be in two parts.
The ``upper part'' of the refutation will consist of, for each $\alpha$,
a resolution derivation $\Pi_\alpha$ of $C_\alpha$ from $\phi$.
All the refutations $\Pi_\alpha$ will be disjoint.
For the ``lower part'' of the refutation we can thus use a binary tree of height~$n$ to
resolve all the clauses~$C_\alpha$ together to derive the empty clause.

The derivation $\Pi_\alpha$ can take two different forms, depending on 
whether or not $\alpha$ falsifies $\phi$. In the standard model $\alpha$
always falsifies $\phi$, since $\phi$ is a contradiction, but we are working in $S^1_2$
so cannot assume this. What we will do instead is describe some polynomial
time machinery which uses $\alpha$ as a parameter and defines an object which, 
in any model of $S^1_2$, is a resolution derivation
with the required property.

\emph{Case 1.}
$\alpha$ falsifies $\phi$. Then we let $\Pi_\alpha$ derive $C_\alpha$ by a single weakening step 
from the first clause of $\phi$ which is false in $\alpha$.

\emph{Case 2.}
$\alpha$ satisfies $\phi$. Then in $S^1_2$, as we observed above, the PLS instance~$Q_\alpha$ has no solution
and in particular we have a polynomial-time set $D_\alpha \subseteq [2^{n^k}]$
and a polynomial-time function $N_\alpha$, such that $2^{n^k}{-}1 \in D_\alpha$ and 
for every $y \in D_\alpha$ both $N_\alpha(y) < y$ and $N_\alpha(y) \in D_\alpha$.
We define $\Pi_\alpha$ as an implicit resolution derivation 
consisting of clauses $C_0, \dots, C_{2^{n^k}{-}1}$. Let $\Delta$ (for ``dummy'') 
be the first clause of $\phi$. For each~$y \in [2^{n^k}]$, if $y \in D_\alpha$ we
set $C_y$ to be the empty clause and derive it by repetition from $C_{N_\alpha(y)}$,
which is an earlier copy of the empty clause.
If $y \notin D_\alpha$ we set $C_y$ to be a copy of $\Delta$, and derive it
by repetition from the original $\Delta$ in $\phi$. 
The final line $C_{2^{n^k}-1}$ is the empty clause and we derive $C_\alpha$ from it
by weakening.
\qed

An interesting case of the construction is 
when there is a family $\phi_n$ of contradictions which are not provably 
contradictions in $S^1_2$, but which nevertheless have quasipolynomial-size refutations
in $G_1$. Then we can typically find a model of $S^1_2$ containing a member $\phi$
of this family together with both a $G_1$ refutation of $\phi$ and at least one assignment~$\alpha$
satisfying~$\phi$. The construction
tells us how to define an exponential-size resolution refutation of~$\phi$ even in such a model.
This does not lead to any inconsistency, since $S^1_2$ cannot prove the soundness
of such a resolution refutation.

%%%%%%%%%%%%%%%%%%%%%%%%%%%%%%%%%%%%%%%%%%%%
\subsection{A generalization to $G_i$ }
%%%%%%%%%%%%%%%%%%%%%%%%%%%%%%%%%%%%%%%%%%%%

In the previous section we showed that there is a polynomial time
function~$f$ which, provably in~$S^1_2$, given a CNF $\phi$ and a $G_1$ refutation of $\phi$ as parameters,
defines an exponential-sized resolution refutation of~$\phi$. 
The simulation of $G_1$ by $\impres(\Res)$ follows by Corollary~\ref{cor:s12_nimpres}.
We would like to generalize this to~$G_{i+1}$ for $i \ge 1$. 
In this direction we can show the following:

\begin{lemma}\label{lem:G_k+1_simulation}
There is a $\PP^{\Sigma^p_i}$
function~$M$ which, provably in~$S^1_2$, given a CNF~$\phi$ and a $G_{i+1}$ refutation of $\phi$ as parameters,
defines an exponential-sized resolution refutation of~$\phi$.
\end{lemma}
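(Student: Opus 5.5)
We follow the proof of Theorem~\ref{the:impres} step by step, replacing $T^1_2$ and PLS by $T^{i+1}_2$ and the search problems that witness its $\forall\Sigma^b_1$ consequences. The soundness of $G_{i+1}$, as a system for refuting CNFs, is a $\forall\Pi^b_1$ sentence provable in $T^{i+1}_2$~\cite{krajicek-pudlak90}. For $\forall\Sigma^b_1$ consequences of $T^{i+1}_2$ we use the known witnessing theorem: they reduce, provably in $S^1_2$ (or at least in $S^1_2$ relativised to a $\Sigma^p_i$ oracle, i.e.\ $S^1_2(\Sigma^b_i)$-style reasoning), to PLS problems whose neighbourhood function and domain predicate are computable in $\PP^{\Sigma^p_i}$, in the style of~\cite{buss-krajicek, beckmann_buss_09}. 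Applying this in the same ``degenerate'' way gives a $\PP^{\Sigma^p_i}$-PLS problem $Q_w=(D_w,N_w,t_w)$ with the following property, provably: if $w=(\phi,\pi,\alpha)$ with $\pi$ a $G_{i+1}$ refutation of $\phi$ and $\alpha$ satisfying $\phi$, then $t_w\in D_w$, and every $y\in D_w$ has $N_w(y)<y$ and $N_w(y)\in D_w$.

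Next we define $M$ exactly as the function $f$ in the proof of Theorem~\ref{the:impres}. Given $\phi,\pi$ and an address, $M$ outputs the corresponding line of the large refutation. The lower part is a binary tree of height~$n$ that resolves the clauses $C_\alpha$ into the empty clause. The upper part consists of the blocks $\Pi_\alpha$, built with the same case split:
\begin{itemize}
\item \emph{Case 1} ($\alpha$ falsifies $\phi$): a single weakening step from the first false clause of $\phi$.
\item \emph{Case 2} ($\alpha$ satisfies $\phi$): lines $C_y$ that are empty clauses when $y\in D_\alpha$, derived by repetition from $C_{N_\alpha(y)}$, and copies of the dummy clause $\Delta$ otherwise.
\end{itemize}
The only change from Theorem~\ref{the:impres} is that computing a line now requires evaluating $D_\alpha$ and $N_\alpha$. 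These are $\PP^{\Sigma^p_i}$ computations, so $M$ is a $\PP^{\Sigma^p_i}$ function. Checking locally that each line follows from its premises needs only the values that $M$ returns.

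Finally we verify, in $S^1_2$ (with $M$ treated as a new function symbol whose defining axioms are the oracle answers), that the object defined by $M$ is a correct resolution refutation. This is a local check, exactly as before: Case~1 is trivial, and Case~2 uses only the implication obtained from the witnessing theorem.

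\textbf{Main obstacle.} The hardest step is making ``provably in $S^1_2$'' meaningful when $M$ uses an oracle. The witnessing theorem naturally holds over a theory that can talk about $\Sigma^p_i$ predicates, such as $S^1_2$ in the language with $\Sigma^b_i$-definable functions, or $S^{i+1}_2$. I would first try to state the implication ``no solution'' as a $\forall$ sentence in which the oracle answers appear as extra inputs that are assumed correct. Case~2 only needs the oracle answers that are actually used, and these can be treated as parameters. This is presumably why the result is stated only as a technical lemma and is not packaged as a natural proof system.
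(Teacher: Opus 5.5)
Your overall architecture matches the paper's. It has the binary tree over the clauses $C_\alpha$, Case~1 by a single weakening, and in Case~2 a chain of empty clauses, each derived by repetition from an earlier one along a structure supplied by a witnessing theorem for $T^{i+1}_2$. The gap is in that witnessing step, and this is exactly where the content of the lemma lies. In the standard model Case~1 always applies, so an oracle-free polynomial-time function already defines a correct refutation; everything nontrivial is in the words ``provably in $S^1_2$''. You need a search problem for which $S^1_2$ proves that a counterexample $(\phi,\pi,\alpha)$ to the soundness of $G_{i+1}$ leaves $Q_w$ without solutions. That proof must also be in a form from which $S^1_2$ can then verify each line locally.

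A PLS problem whose domain and neighbourhood function are arbitrary adaptive $\PP^{\Sigma^p_i}$ computations does not give you this. The witnessing of Buss and Kraj\'{i}\v{c}ek~\cite{buss-krajicek} with such components comes from relativizing the $T^1_2$ argument. Its verification is naturally carried out in a relativized theory or in $S^{i+1}_2$, not in $S^1_2$. Indeed $S^1_2$ is not expected even to prove that such computations have outputs, since totality of $\PP^{\Sigma^p_i}$ functions is the characteristic strength of $S^{i+1}_2$. Feeding the oracle answers in as extra inputs assumed correct makes the statement expressible, but it does not produce an $S^1_2$ proof of it, and you leave this step open.

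The paper avoids the issue by using the characterization of $\forall\Sigma^b_1(T^{i+1}_2)$ by the game induction principle $\mathrm{GI}_{i+1}$~\cite{skelley-thapen}. This is an ordinary TFNP problem with polynomial-time components, so the no-solution implication is a universal statement provable in $S^1_2$. The oracle then enters only through the single $\Sigma^b_i$ formula $\theta_j(z)$, ``$z$ is a losing first move for $A$ in $H_j$''. The function $M$ decides it with one query at each node $(j,z)$ of $[a{-}1]\times[a]$, taken in reverse order, and places $\bot$ there exactly when $\theta_j(z)$ holds. The local fact $\theta_j(z)\rightarrow\theta_{j+1}(f_j(z))$ needed for the repetition steps follows in $S^1_2$ from the correctness of the game reductions, and $\theta_0(0)$ puts $\bot$ at the last node.

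Your version could be repaired in the same spirit. You would need an $S^1_2$-formalizable witnessing theorem in which the successor function is polynomial time and each line needs only one oracle query, deciding a fixed $\Sigma^b_i$ or $\Pi^b_i$ predicate; the $\Pi^p_k$-PLS problems of~\cite{beckmann_buss_09} have this shape. Without further work, it cannot be repaired with general $\PP^{\Sigma^p_i}$-computable $D_\alpha$ and $N_\alpha$.
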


In other words, in $S^1_2$ we can convert $G_{i+1}$ proofs into $\PP^{\Sigma^p_i}$-uniform exponential-sized resolution refutations. In the other direction, it is easy to see that $T^{i+1}_2$ proves the soundness of such
 refutations. 
Ideally, we would like to combine these two facts to conclude something like
``$G_{i+1}$ is polynomially equivalent to $\Sigma^p_i$-$\impres(\Res)$''
where $\Sigma^p_i$-$\impres(\Res)$ is a notional proof system
in which a refutation of $\phi$ is a pair $(\Pi, C)$ of a $\Sigma^p_i$ circuit $C$,
that is, a circuit equipped with gates for a $\Sigma^p_i$ oracle, and
a resolution proof~$\Pi$ of the statement that~$C$ encodes the structure of an exponential-sized
resolution refutation of~$\phi$.
We do not draw this conclusion because we do not know how to naturally define such 
a system  $\Sigma^p_i$-$\impres(\Res)$. In particular, it seems that it would
require some machinery to allow resolution to reason naturally about $\Sigma^p_i$ circuits.

\bigskip

\noindent \emph{Proof of Lemma~\ref{lem:G_k+1_simulation}.}
In the previous proof we used that,
because the soundness of $G_1$ is a consequence of $T^1_2$, we know that provably in $S^1_2$,
given a CNF with simultaneously a satisfying assignment and a $G_1$ refutation,
we can produce a PLS instance with no solution. 
Here we will use that, because the soundness of
 $G_{i+1}$ is a consequence of $T^{i+1}_2$, we know that provably in $S^1_2$,
given a CNF with simultaneously a satisfying assignment and a $G_{i+1}$ refutation,
we can produce an instance of the \emph{$(i{+}1)$-turn game induction principle} 
with no solution.
Here the $(i{+}1)$-turn game induction principle, or~$\mathrm{GI}_{i+1}$, 
is a TFNP problem characterizing the $\forall \Sigma^b_1$
consequences of $T^{i+1}_2$~\cite{skelley-thapen}.

An instance of $\mathrm{GI}_{i+1}$ consists of a sequence $H_0, \dots, H_{a-1}$ of $i$-turn games,
where~$a$ is a binary, that is, exponential, size parameter.
Each game is an $i$-ary relation
$H_j(y_1, \dots, y_i)$ in which the moves $y_1, \dots, y_i$ are elements of $[a]$,
with player $A$ making the odd-numbered and player $B$ the even-numbered moves;
they are such that $B$ wins every play of $H_0$, and $A$ wins every play of $H_{a-1}$,
and they come with a suite of purported ``game reductions'' 
which are polynomial-time functions that convert a winning strategy for $B$ in $H_j$
into one in $H_{j+1}$.
Let us write $\theta_j (z)$ for the $\Sigma^b_i$ relation
``there is a winning strategy for $B$ in game $H_j$, if $A$'s first move is~$z$'',
or in short ``$z$ is a losing first move for $A$ in $H_j$''.

By the characterization of $\forall \Sigma^b_1 (T^{i+1}_2)$ in~\cite{skelley-thapen}, 
there is a $\mathrm{GI}_{i+1}$ problem $Q_w$
 with parameter~$w$
such that, provably in~$S^1_2$, if $w$ is a triple $(\phi, \pi, \alpha)$
where
\begin{itemize}
\item
$\phi$ is a CNF in variables $x_1, \dots, x_n$
\item
$\pi$ is a $G_{i+1}$ refutation of $\phi$, and 
\item
$\alpha$ is an assignment satisfying $\phi$
\end{itemize}
then $Q_w$ has no solution. Here ``$Q_w$ has no solution''
means that all the game reductions work correctly;
in particular, we have polynomial-time functions $f_0, \dots, f_{a-2}$
among the reductions
such that $\forall j \forall z , \ \theta_j(z) \rightarrow \theta_{j+1}(f_j(z))$,
meaning that if $z$ is a losing first move for $A$ in $H_j$, then 
$f_j(z)$ is a losing first move for $A$ in $H_{j+1}$.
We may assume that the size parameter~$a$ of the instance
is a function of the number~$n$ of variables in $\phi$ and in particular is independent of~$\alpha$.

We now imitate the previous proof. Again we define a resolution 
refutation of $\phi$ in two parts, where the upper part contains, for each total assignment $\alpha$,
a derivation $\Pi_\alpha$
of $C_\alpha$ from $\phi$, where $C_\alpha$ is the clause expressing the negation of $\alpha$.
The lower part simply resolves all the clauses $C_\alpha$ together in a binary tree to derive the empty clause.

As before, if $\alpha$ falsifies $\phi$ then we set $\Pi_\alpha$ to be a derivation of $C_\alpha$
from the first clause of $\phi$ which is false in~$\alpha$, by a single weakening step.
If $\alpha$ satisfies~$\phi$, then we define $\Pi_\alpha$ to be the following resolution derivation of the empty clause,
from which we can then derive $C_\alpha$ by weakening.
Let $\Delta$ be the first clause of $\phi$.
The underlying graph of $\Pi_\alpha$ consists of all pairs $j,z  \in [a{-}1] \times [a]$ in reverse order, that is, with larger values of~$j$ and~$z$ occurring earlier in the proof.
Node $(j,z)$ is assigned the empty clause $\bot$ if $\theta_j(z)$ holds,
and is otherwise assigned the clause~$\Delta$. Then for each instance of $\bot$ in the proof
occurring at a node $(j,z)$, we can derive it by repetition from the earlier node
$(j+1, f_j(z))$, which will also have been assigned $\bot$ since~$\theta_{j+1}(f_j(z))$.
The final node $(0,0)$ is assigned~$\bot$ since B wins every play of $H_0$
so $\theta_0(0)$ must hold.
\qed

\paragraph{Acknowledgment.} 
We used an LLM to check the manuscript at a final stage of editing, to find typos and grammatical errors.

%%%%%%%%%%%%%%%%%%%%%%%%%%%%%%%%%%%%%%%%%%%%%%%%%%%%%%%
%%%%%%%%%%%%%%%%%%%%%%%%%%%%%%%%%%%%%%%%%%%%%%%%%%%%%%%

\end{document}